\documentclass[10pt,journal,compsoc]{IEEEtran}
%

\usepackage{amsmath}
\usepackage{amsfonts}
\usepackage{amsthm}
\usepackage{makecell}
\usepackage{algorithmic}
\usepackage{graphicx}
\usepackage{textcomp}
\usepackage[dvipsnames,table]{xcolor}
\usepackage{tikz}
\usepackage{multicol}
\usepackage{subcaption}
\usepackage{pgfplots}
\pgfplotsset{compat=1.18}
\usepackage{tcolorbox}
\usepackage{url}
\usepackage{hyperref}
\usepackage[ruled,linesnumbered,vlined,nofillcomment]{algorithm2e} 
\usepackage{bm}
\usepackage{cleveref}
\usepackage{multirow}
\usetikzlibrary{shapes.geometric}

\newtheorem{theorem}{Theorem}
\newtheorem{assumption}{Assumption}

\usepackage{hhline}

\newcommand\colorrow{\rowcolor{gray!25}}
\newcommand\highlightrow{\rowcolor{green!10}}

\newcommand{\MethodName}{\texttt{Ampere}}

\newcommand{\cir}[1]{\tikz[baseline]{%
    \node[anchor=base, fill=black, draw, circle, inner sep=0, minimum width=1.1em]{#1};}}

%
\ifCLASSOPTIONcompsoc
  \usepackage[nocompress]{cite}
\else
  \usepackage{cite}
\fi

\hyphenation{op-tical net-works semi-conduc-tor}

\begin{document}

\title{\texttt{Ampere}: Communication-Efficient and High-Accuracy Split Federated Learning}

\author{
    Zihan~Zhang,
    Leon~Wong,
    and~Blesson~Varghese
    \IEEEcompsocitemizethanks{
        \IEEEcompsocthanksitem Z. Zhang and B. Varghese are with the School of Computer Science, University of St Andrews, UK. Corresponding author: zz66@st-andrews.ac.uk
        \IEEEcompsocthanksitem W. Leon is with Rakuten Mobile, Inc., Japan.
    }
}


\IEEEtitleabstractindextext{%
\begin{abstract}
A Federated Learning (FL) system collaboratively trains neural networks across devices and a server but is limited by significant on-device computation costs. 
Split Federated Learning (SFL) systems mitigate this by offloading a block of layers of the network from the device to a server. 
However, in doing so, it introduces large communication overheads due to frequent exchanges of intermediate activations and gradients between devices and the server and reduces model accuracy for non-IID data. 
We propose \MethodName, a novel collaborative training system that simultaneously minimizes on-device computation and device-server communication while improving model accuracy. 
Unlike SFL, which uses a global loss by iterative end-to-end training, \MethodName\ develops unidirectional inter-block training to sequentially train the device and server block with a local loss, eliminating the transfer of gradients. 
A lightweight auxiliary network generation method decouples training between the device and server, reducing frequent intermediate exchanges to a single transfer, which significantly reduces the communication overhead. 
\MethodName\ mitigates the impact of data heterogeneity by consolidating activations generated by the trained device block to train the server block, in contrast to SFL, which trains on device-specific, non-IID activations.
Extensive experiments on multiple CNNs and transformers show that, compared to state-of-the-art SFL baseline systems, \MethodName\ (i) improves model accuracy by up to 13.26\% while reducing training time by up to 94.6\%, (ii) reduces device-server communication overhead by up to 99.1\% and on-device computation by up to 93.13\%, and (iii) reduces standard deviation of accuracy by 53.39\% for various non-IID degrees highlighting superior performance when faced with heterogeneous data.
\end{abstract}

\begin{IEEEkeywords}
Federated learning, split federated learning, communication cost, on-device computation, data heterogeneity.
\end{IEEEkeywords}}

\maketitle

\IEEEdisplaynontitleabstractindextext

%
\IEEEpeerreviewmaketitle

\IEEEraisesectionheading{\section{Introduction}\label{sec:intro}}
A federated learning (FL)~\cite{fedavg, DBLP:journals/corr/KonecnyMR15,DBLP:journals/corr/KonecnyMRR16,DBLP:journals/corr/KonecnyMYRSB16} system is a distributed system that enables multiple devices to collaboratively train a deep neural network (DNN), such as convolutional neural networks (CNN) and transformers, without sharing their original input data. Each device trains a model locally in FL and periodically sends it to the server for aggregation.
However, training a complete DNN on end-user devices with relatively limited computing and memory resources, such as smartphones or tablets, presents significant challenges. These devices typically have limited GPU capabilities, making the training of even medium-sized DNNs prohibitively slow. 
Additionally, limited RAM/VRAM on these devices often prevents them from holding both the model parameters and training data in memory. 

Split Federated Learning (SFL)~\cite{splitfed} systems reduce the computational workload on devices in FL by offloading part of the model training to a server with more computational resources. In an SFL system, the model is divided into a device block and a server block. The device block processes inputs locally and sends intermediate activations to the server, which computes the loss. The server block then returns the gradients of the activations to the device (Figure~\ref{subfig:sfl}).

\textbf{Large communication overhead:} While SFL systems reduce on-device computation, they introduce communication in addition to the periodic model transfers in FL. Each training iteration exchanges activations and gradients between the device and server, resulting in \textbf{frequent}, \textbf{high-volume} data transfers (red arrows in Figure~\ref{subfig:sfl}). In bandwidth-limited environments, this overhead severely impacts efficiency, making SFL systems impractical for real-world deployments (see Challenge~1 and Challenge~2 discussed further in Section~\ref{subsec:challenge}).

\begin{figure}[tp]
    \centering
    \begin{subfigure}{0.48\linewidth}
        \includegraphics[width=0.97\linewidth]{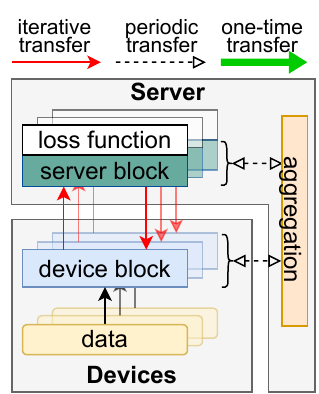}
        \caption{Split Federated Learning}
        \label{subfig:sfl}
    \end{subfigure}
    \begin{subfigure}{0.47\linewidth}
        \includegraphics[width=0.91\linewidth]{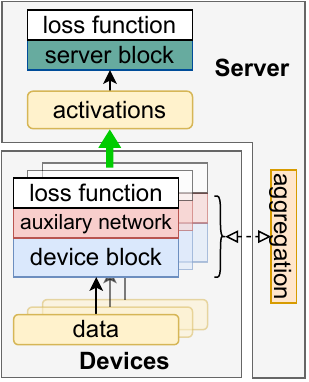}
        \caption{\MethodName}
        \label{subfig:fedgrail}
    \end{subfigure}
    \caption{Training in a Split Federated Learning (SFL) system and \MethodName. SFL exchanges activations and gradients in each iteration; \MethodName\ only transfers activations once.}
    \label{fig:train_diagram}
\end{figure}

Several methods have been proposed to reduce the communication overhead in SFL systems, such as compressing activations and gradients~\cite{bottleneck,c3sl,b2f,topksparse,ecofed}, overlapping communication and computation~\cite{pipar}, and computing local gradients~\cite{splitgp,fedgkt} (discussed further in Section~\ref{sec:rw}). However, these methods still require transferring activations each iteration, preventing them from achieving the communication efficiency of classic FL.

To address the above limitation, we propose \textbf{\MethodName} (Figure~\ref{subfig:fedgrail}), a novel system that eliminates iterative activation and gradient transfers in SFL by developing \textbf{unidirectional inter-block training} (see Section~\ref{subsubsec:uit}). Instead of exchanging activations and gradients at each iteration, \MethodName\ first trains the device block to convergence using a \textbf{lightweight auxiliary network} (see Section~\ref{subsubsec:aux}) to compute local loss without needing gradients from the server. Once the device block is trained, activations are transferred to the server \textbf{once}, allowing the server block to be trained independently. The primary communication overhead of SFL arises from iterative activation and gradient transfers, but \MethodName\ changes this to a single activation transfer (green arrows in Figure~\ref{subfig:sfl}; no gradients are transferred from the server). Communication in \MethodName\ is reduced to the exchanges of device blocks for aggregation, which has an even lower communication overhead than classic FL.

\textbf{Degraded accuracy due to data heterogeneity:}
In addition to communication overhead, SFL systems inherit \emph{non-independent and identically distributed (non-IID)} data-related problems from FL. Given heterogeneous data each device will have a different data distribution in FL, leading to model divergence and reduced accuracy. These problems persist in SFL as non-IID data on devices generates skewed activations that negatively impact the training of the server block. While existing FL methods mitigate non-IID problems to some extent using modified aggregation strategies and objective functions~\cite{fedavgm,fedprox,scaffold}, they are less effective in SFL. To address this, \MethodName\ leverages the split architecture to merge device activations into a single, more homogeneous dataset for server block training. This \textbf{activation consolidation} method (see Section~\ref{subsubsec:act}) ensures that the server block trains on more IID inputs, which in turn improves the overall model accuracy.

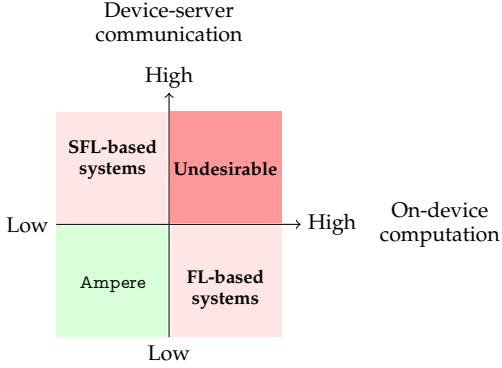
\begin{figure}[t]
    \centering
    \footnotesize
    \begin{tikzpicture}[scale=0.5]
        \draw[->] (-3,0) -- (3.5,0) node at (7.2,0) {\parbox{2cm}{\centering On-device\\computation}};
        \draw[->] (0,-3) -- (0,3.5) 
        node at (0,5.4) {\parbox{2cm}{\centering Device-server communication}};

        \fill[green!15] (-0.05, -0.05) rectangle (-3,-3);
        \node[below] at (-1.5, -1.2) {\scriptsize \MethodName};
        
        \fill[red!10] (0.05, -0.05) rectangle (3,-3);
        \node[below] at (1.5, -1) {\scriptsize \textbf{\parbox{2cm}{\centering FL-based\\systems}}};
        
        \fill[red!10] (-0.05, 0.05) rectangle (-3,3);
        \node[above] at (-1.5, 1) {\scriptsize \textbf{\parbox{2cm}{\centering SFL-based\\systems}}};

        \fill[red!40] (0.03,0.03) rectangle (3,3);
        \node at (1.5, 1.5) {\scriptsize \textbf{Undesirable}};

        \node[left] at (-3,0) {Low};
        \node[right] at (3.5,0) {High};

        \node[below] at (0,-3) {Low};
        \node[below] at (0,4.4) {High};

    \end{tikzpicture}
    \caption{Trade-off between on-device computation and device-server communication for Federated Learning (FL), Split Federated Learning (SFL), and \MethodName. FL and SFL have similar accuracy whereas \MethodName\ has a higher accuracy than these methods.
    }
    \label{fig:position}
\end{figure}

As shown in Figure~\ref{fig:position}, \MethodName\ is the first system to simultaneously \textbf{(i)} reduce communication overhead to FL levels while achieving low on-device computation as in SFL, and \textbf{(ii)} improve model accuracy on non-IID data, surpassing both FL and SFL. 
Extensive experimental evaluations conducted on various DNNs and non-IID datasets demonstrate that, compared to state-of-the-art SFL baselines, \MethodName\ reduces communication overhead by up to 99.1\% and on-device computation by up to 93.13\%, reduces training time by up to 94.6\%, and improves model accuracy by up to 13.26\%. In addition, \MethodName\ reduces the standard deviation of accuracy for various non-IID degrees by up to 53.39\% compared to other baselines, demonstrating that it is less impacted by non-IID data. 

The key \textbf{research contributions} of this paper are as follows:
\begin{enumerate}
    \item \MethodName, a novel system that is underpinned by unidirectional inter-block training, a new method to significantly reduce both communication and computation overhead while simultaneously improving model accuracy compared to FL/SFL.
    \item A new method to generate a lightweight auxiliary network that breaks the interdependency between the device and server blocks during training, eliminating the iterative transfer of intermediate results. 
    \item An efficient activation consolidation and management strategy to reduce activation heterogeneity is demonstrated to improve model accuracy.
\end{enumerate}

The rest of this paper is organized as follows. Section~\ref{sec:background} introduces SFL and identifies three opportunities to address critical challenges in SFL. Section~\ref{sec:design} presents \MethodName\ and describes its underlying methods.
Section~\ref{sec:analysis} analyzes model convergence and communication costs of \MethodName. 
Section~\ref{sec:results} evaluates \MethodName\ against relevant baselines. Section~\ref{sec:rw} discusses the related work, and Section~\ref{sec:concl} concludes the paper.

\section{Background}
\label{sec:background}
In this section, we present split federated learning (SFL), identify three key challenges that limit its adoption in the real-world, and present opportunities in collaborative learning beyond SFL.

\subsection{Split Federated Learning}
In a classic FL system, training the complete neural network $\theta$ on devices is prohibitively expensive (computation and memory usage) on end-user devices.

Split Federated Learning (SFL)~\cite{splitfed} systems address these constraints by splitting the model into two blocks -- a device block $\theta^{(d)}$ and a server block $\theta^{(s)}$ that is offloaded to the server. Suppose \(K\) devices participate in training; each device \(k \in [K]\) has its own dataset $\mathcal{D}_k = \left\{(x_i, y_i)\right\}_{j=1}^{n_k}$, with \(x_i\) denoting an input sample, \(y_i\) its label, and \(n_k = |\mathcal{D}_k|\). The overall objective is to train a global model $\theta=(\theta^{(d)},\theta^{(s)})$ that minimizes the following objective function:
\begin{equation}
    \min_\theta F\left(\theta\right) = \sum_{k=1}^{K} \frac{n_k}{n} F_k\left(\theta^{(d)},\theta^{(s)}\right)
    \label{eq:sfl-goal}
\end{equation}
where $n = \sum_{k=1}^{K} n_k$ is the total number of data samples across all devices. For a given device $k$, its local loss function \(F_k\) is defined as:
\begin{equation}
    F_k\left(\theta^{(d)},\theta^{(s)}\right) = \frac{1}{n_k} \sum_{i=1}^{n_k} f^{(s)}\left(\theta^{(s)}_k; f^{(d)}\left(\theta^{(d)}_k; x_i\right), y_i\right)
\end{equation}
where $f^{(d)}\left(\theta^{(d)}_k; x_i\right)$ denotes the device-side forward function, and $f^{(s)}\left(w^{(s)}_k; *\right)$ is the server-side loss function that computes the prediction error with respect to the label \(y_i\).

\textbf{Forward and backward passes in SFL:}
During local training on device $k$, the device block $\theta^{(d)}_k$ processes $x_i$ to produce activations $\xi_i=f^{(d)}\left(\theta^{(d)}_k; x_i\right)$, which are sent to the server. The server block $\theta^{(s)}_k$ completes the forward pass and computes the loss. In the backward pass, gradients of $\theta^{(s)}_k$ are obtained and used to update it. Simultaneously, the server sends the corresponding gradients of $\xi$ back to the device, enabling the backward pass of $\theta^{(s)}_k$. $\theta^{(d)}_k$ and $\theta^{(s)}_k$ are updated using stochastic gradient descent (SGD):
\begin{equation}
    \theta^{(d)}_k \leftarrow \theta^{(d)}_k - \eta \nabla F_k\left(\theta^{(d)}_k\right), \enspace 
    \theta^{(s)}_k \leftarrow \theta^{(s)}_k - \eta \nabla F_k\left(\theta^{(s)}_k\right)
\end{equation}
where $\eta$ is the learning rate. \textit{Exchanging activations $\xi_i$ and their gradients in each training iteration adds substantial device-server communication overhead beyond classic FL.}

\textbf{Model aggregation:}
After a user-defined number of local iterations, each device sends its updated $\theta^{(d)}_k$ to the server. The server then aggregates these updates into $\theta^{(d)}$ using the FedAvg algorithm~\cite{fedavg}; meanwhile, the server blocks are aggregated into $\theta^{(s)}$:
\begin{equation}
    \theta^{(d)} \leftarrow \sum_{k=1}^{K} \frac{n_k}{n}\,\theta^{(d)}_k, \enspace 
    \theta^{(s)} \leftarrow \sum_{k=1}^{K} \frac{n_k}{n}\,\theta^{(s)}_k
\end{equation}
Finally, the server broadcasts the updated $\theta^{(d)}$ to all devices for the next round of local training. This process is repeated until convergence. \textit{The non-IID nature of the data locally generated on devices $\{D_k\}$ reduces model accuracy compared to training on a centralized dataset $\bigcup\limits_{k=1}^K D_k$}.

\subsection{Opportunities Beyond SFL Systems}
\label{subsec:challenge}

Although SFL systems reduce on-device computation, they face three critical challenges due to additional communication (Challenge~1 and Challenge~2) and data heterogeneity (Challenge~3), which limit their adoption in practice.

\subsubsection{Challenge 1: Trade-Off between On-device Computation and Device-Server Communication} 
\label{subsubsec:challenge1}

Selecting the optimal layer at which the network must be split, referred to as split point, is crucial as it impacts both on-device computation and device-server communication. 
To illustrate this a MobileNetV3-Large (MobileNet-L)~\cite{mobilenet} model is trained on the CIFAR-10~\cite{cifar10,cifar10-2} dataset (Figure~\ref{fig:computation_communication}). The experimental setup is presented in Section~\ref{subsec:setup}. On-device computation (measured in GFLOPS) and device-server communication volume (shown in GB), including model parameters and intermediate results (activations and gradients), are measured for various split points. Lower split points offload more layers to the server and reduce on-device computation. However, earlier layers generate larger intermediate results, increasing the communication overhead. For example, when compared to classic FL (when all 19 layers are on the device), splitting the network after the first layer reduces on-device computation by 98.21\%, but increases communication by 9.63$\times$. 
\begin{tcolorbox}[
    width=0.49\textwidth,
    colframe=black!50!black,
    colback=gray!8,boxrule=0.5pt,
    left=2pt,right=2pt,top=2pt,bottom=2pt]
\textbf{Opportunity 1:}
A fundamentally new system is required to eliminate the on-device computation and device-server communication trade-off, as it is impossible to select a split point that minimizes both simultaneously. 
\end{tcolorbox}

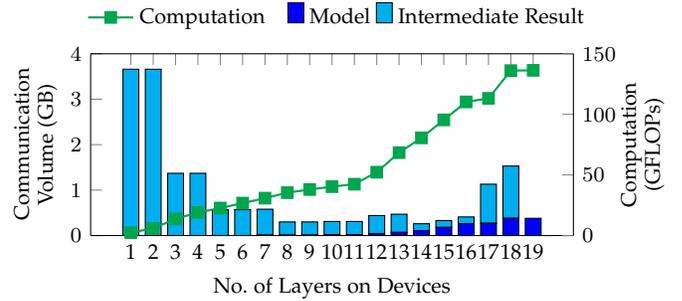
\begin{figure}[tp]
    \centering
   \begin{tikzpicture}[scale=1]
    \begin{axis}[
        xlabel={No. of Layers on Devices},
        xlabel style={align=center, font=\footnotesize},
        height=4cm,
        width=8cm,
        ybar stacked, 
        ymin=0, ymax=4, 
        axis y line*=left,
        xticklabel style={font=\footnotesize},
        yticklabel style={font=\footnotesize},
        ylabel style={at={(-0.05,0.5)}, font=\footnotesize},
        ylabel={\parbox{2cm}{\centering Communication\\Volume (GB)}},
        bar width=6pt,
        enlarge x limits=0.1,
        symbolic x coords={1, 2, 3, 4, 5, 6, 7, 8, 9, 10, 11, 12, 13, 14, 15, 16, 17, 18, 19},
        xtick=data,
        every node near coord/.append style={font=\small},
        legend style={at={(0.39,1.2)}, anchor=west, legend columns=2, font=\footnotesize, draw=none}
    ]
    \addplot[fill=blue] coordinates {
        (1, 0.0)
        (2, 0.0)
        (3, 0.0)
        (4, 0.0)
        (5, 0.0)
        (6, 0.0)
        (7, 0.01)
        (8, 0.01)
        (9, 0.01)
        (10, 0.02)
        (11, 0.02)
        (12, 0.04)
        (13, 0.07)
        (14, 0.11)
        (15, 0.18)
        (16, 0.26)
        (17, 0.27)
        (18, 0.38)
        (19, 0.38)
    };
    \addlegendentry{Model}
    
    \addplot[fill=cyan] coordinates {
        (1, 3.66)
        (2, 3.66)
        (3, 1.37)
        (4, 1.37)
        (5, 0.57)
        (6, 0.57)
        (7, 0.57)
        (8, 0.29)
        (9, 0.29)
        (10, 0.29)
        (11, 0.29)
        (12, 0.4)
        (13, 0.4)
        (14, 0.15)
        (15, 0.15)
        (16, 0.15)
        (17, 0.86)
        (18, 1.15)
        (19, 0)
    };
    \addlegendentry{Intermediate Result}

    \end{axis}

    \begin{axis}[
        height=4cm,
        width=8cm,
        axis y line*=right,
        axis x line=none,
        enlarge x limits=0.1,
        symbolic x coords={1, 2, 3, 4, 5, 6, 7, 8, 9, 10, 11, 12, 13, 14, 15, 16, 17, 18, 19},
        ymin=0, ymax=150,
        xticklabel style={font=\footnotesize},
        yticklabel style={font=\footnotesize},
        ylabel={Model Accuracy (\%)},
        y label style={at={(axis description cs:1.08,0.5)},font=\footnotesize},
        ylabel={\parbox{2cm}{\centering Computation\\(GFLOPs)}},
        legend style={at={(0.39,1.2)}, anchor=east, legend columns=2, font=\footnotesize, draw=none},
    ]
    \addplot[mark=square*, Green, thick] coordinates {
        (1, 2.44)
        (2, 5.94)
        (3, 13.71)
        (4, 19.02)
        (5, 22.65)
        (6, 26.75)
        (7, 30.84)
        (8, 35.38)
        (9, 37.97)
        (10, 40.35)
        (11, 42.37)
        (12, 52.26)
        (13, 68.41)
        (14, 80.59)
        (15, 95.43)
        (16, 110.27)
        (17, 113.21)
        (18, 136.08)
        (19, 136.32)
    };
    \addlegendentry{Computation}
    \end{axis}
    \end{tikzpicture}
    \caption{Trade-off between communication volume and on-device computation per round of MobileNet-Large using an SFL system for different split points.}
    \label{fig:computation_communication}
\end{figure}

\subsubsection{Challenge 2: High Device-Server Communication Volume and Frequency}
\label{subsubsec:challenge2}

The split point in SFL is usually selected at the early layers of the model to ensure that the device block remains lightweight and can be trained efficiently on resource-constrained devices~\cite{fedadapt,pipar}. However, this leads to a significant increase in the \textit{communication volume} compared to FL, as large intermediate activations must be transferred from devices to the server during each iteration.

Additionally, end-to-end training of the device and server blocks is tightly interdependent, as the device block relies on the global loss function computed on the server. This leads to a high \textit{communication frequency} that cannot be reduced, making SFL impractical in bandwidth-constrained environments.

We compare the communication volume and frequency for training a MobileNet-L model on CIFAR-10 using FL and SFL with a split point at the first layer. Both methods train for 150 epochs until convergence. Each time a model or a batch of activations/gradients is transferred, it is counted as a communication round. As shown in Table~\ref{tab:challenge-comm}, SFL increases the total communication volume by one order of magnitude and the communication frequency by three orders of magnitude compared to FL.

\begin{table}
    \centering
    \caption{Communication volume and frequency in FL and SFL for MobileNet-L.}
    \label{tab:challenge-comm}
    \begin{tabular}{c|c|c}
        \hline
        \colorrow
         \textbf{System} & \makecell{\textbf{Communication}\\ \textbf{volume (GB)}} & \makecell{\textbf{Communication}\\ \textbf{rounds per hour}} \\
         \hline
         FL & 57 & 768  \\
         SFL & 549 & 3578280  \\
         \hline
    \end{tabular}
\end{table}

\begin{tcolorbox}[
    width=0.49\textwidth,
    colframe=black!50!black,
    colback=gray!8,boxrule=0.5pt,
    left=2pt,right=2pt,top=2pt,bottom=2pt]
\textbf{Opportunity 2:}
Alternative methods are required to break the iterative dependency between the server and device arising from end-to-end training, without relying on a global loss function.
\end{tcolorbox}

\subsubsection{Challenge 3: Data Heterogeneity Reduces Model Accuracy}
\label{subsubsec:challenge3}

In classic FL, each device’s dataset is typically non-independent and non-identically distributed (non-IID), causing local models to learn device-specific patterns that do not generalize well when aggregated. This reduces the accuracy of the global model compared to training on IID data.

In SFL, each device processes non-IID data, generating skewed activations. The server typically trains multiple server blocks, each on activations from a single device. Similar to device blocks, the server blocks are aggregated periodically. As a result, server blocks learn from skewed activations that represent feature abstractions of the non-IID input data, reducing model accuracy. As data heterogeneity increases and more devices participate, the activation is skewed more, which further degrades accuracy.

\begin{filecontents*}{acc_vs_noniid.csv}
alpha,12_devices,24_devices,60_devices,120_devices,1_device
{0.1},46.3625,43.875,39.15,37.425,81.35
{0.33},62.0625,58.9,54.7125,51.5375,81.35
{1},72.7,66.7375,62.3375,57.25,81.35
\end{filecontents*}

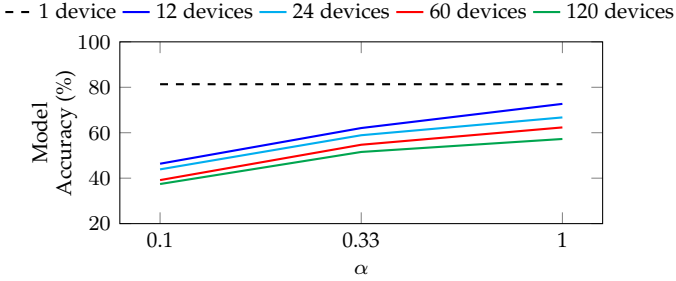
\begin{figure}
    \centering
    \begin{tikzpicture}
    \begin{axis}[
        width=8cm,
        height=4cm,
        xlabel={$\alpha$},
        ylabel={\parbox{2cm}{\centering Model\\Accuracy (\%)}},
        symbolic x coords={0.1,0.33,1},
        xtick=data,
        ymin=20, ymax=100,
        legend style={at={(0.46,1.05)}, anchor=south,legend columns=5, font=\footnotesize, draw=none},
        legend image post style={xscale=0.6},
        xticklabel style={font=\footnotesize},
        yticklabel style={font=\footnotesize},
        xlabel style={align=center, font=\footnotesize},
        ylabel style={at={(-0.07,0.5)}, font=\footnotesize},
    ]
    
    \addplot[
        color=black,
        thick,
        dashed,
    ]
    table[
        x=alpha,
        y=1_device,
        col sep=comma
    ]{acc_vs_noniid.csv};
    \addlegendentry{1 device}

    \addplot[
        color=blue,
        solid,
        thick,
    ]
    table[
        x=alpha,
        y=12_devices,
        col sep=comma
    ]{acc_vs_noniid.csv};
    \addlegendentry{12 devices}

    \addplot[
        color=cyan,
        solid,
        thick,
    ]
    table[
        x=alpha,
        y=24_devices,
        col sep=comma
    ]{acc_vs_noniid.csv};
    \addlegendentry{24 devices}

    \addplot[
        color=red,
        solid,
        thick,
    ]
    table[
        x=alpha,
        y=60_devices,
        col sep=comma
    ]{acc_vs_noniid.csv};
    \addlegendentry{60 devices}

    \addplot[
        color=Green,
        solid,
        thick,
    ]
    table[
        x=alpha,
        y=120_devices,
        col sep=comma
    ]{acc_vs_noniid.csv};
    \addlegendentry{120 devices}
    
    \end{axis}
    \end{tikzpicture}
    \caption{Model accuracy of MobileNet-Large for varying number of devices and non-IID degree $\alpha$.}
    \label{fig:acc_noniid}
\end{figure}

Data heterogeneity is modeled using the CIFAR-10 dataset, where the data is partitioned across multiple devices following a Dirichlet distribution $Dir(\frac{\alpha}{1-\alpha+\epsilon})$ (see Section~\ref{subsec:setup}). A smaller $\alpha$ results in a higher non-IID degree, whereas $\alpha=1$ approximates an IID scenario. Figure~\ref{fig:acc_noniid} shows that model accuracy decreases when $\alpha$ gets smaller because the data on individual devices becomes more heterogeneous. 
Additionally, for a given $\alpha$, increasing the number of devices further amplifies heterogeneity and reduces accuracy. The dashed line in the figure shows the accuracy achieved when training on the entire dataset (which is IID), serving as an upper bound for model accuracy under any non-IID configuration.
\begin{tcolorbox}[
    width=0.49\textwidth,
    colframe=black!50!black,
    colback=gray!8,boxrule=0.5pt,
    left=2pt,right=2pt,top=2pt,bottom=2pt]
\textbf{Opportunity 3:}
A new system is required to enhance server block training by reducing activation heterogeneity from device blocks to improve overall model accuracy. 
\end{tcolorbox}

\section{\texttt{A\MakeLowercase{mpere}}}
\label{sec:design}
In this section, we leverage the three opportunities identified in the previous section to design a new system \MethodName. Three new methods, namely unidirectional inter-block training, lightweight auxiliary network generation, and efficient activation consolidation are developed and incorporated in \MethodName. The training algorithm incorporating these methods is also considered. 

\subsection{Overview}

\begin{figure*}[tp]
    \centering
    \includegraphics[width=\linewidth]{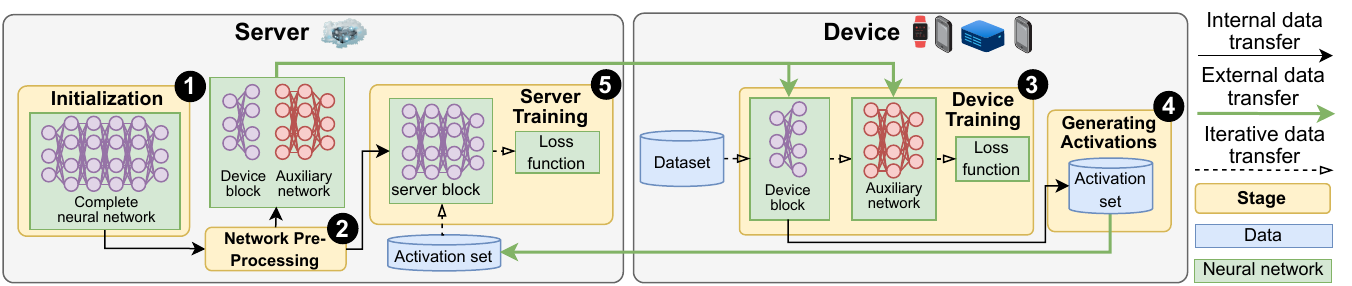}
    \caption{Overview of \MethodName.}
    \label{fig:overview}
\end{figure*}

\MethodName\ operates with a single server and $K$ devices, collaboratively training a deep neural network $\theta$ via \textbf{unidirectional inter-block training (UIT)} instead of training based on conventional end-to-end backpropagation. Specifically, \MethodName\ splits $\theta$ into a device block and a server block, first training the device block until convergence before proceeding to train the server block. This process follows five key steps (see Figure~\ref{fig:overview}):

Step {\color{white}\cir{1}}: Initialization - the server initializes the complete neural network $\theta$.

Step {\color{white}\cir{2}}: Network Pre-processing - $\theta$ is split into two blocks:
(i)~Device block $\theta^{(d)}$, comprising the initial layers of $\theta$.
(ii)~Server block $\theta^{(s)}$, containing the remaining layers.
An auxiliary network $\tilde{\theta}^{(d)}$ is generated to connect the outputs of $\theta^{(d)}$ to a local loss function. The server then transfers $\theta^{(d)}$ and $\tilde{\theta}^{(d)}$ to all participating devices, denoted as $\theta^{(d)}_k$ and $\tilde{\theta}^{(d)}_k$ for device $k$.

Step {\color{white}\cir{3}}: Device Training - devices train $\left\{\theta^{(d)}_k, \tilde{\theta}^{(d)}_k\right\}$ locally until convergence using the auxiliary network and local loss function. Unlike SFL, gradients of device blocks are computed locally in \MethodName\ without sending intermediate activations to the server.

Step {\color{white}\cir{4}}: Generating Activations - each device generates activations by feeding local data into its trained device block $\theta^{(d)}$. These activations are \textbf{transferred once} to the server, where they are consolidated for training the server block.  

Step {\color{white}\cir{5}}: Server Training - $\theta^{(s)}$ is trained on the server using the consolidated activations, without interacting with devices.  

The above sequential steps execute only once, eliminating the iterative back-and-forth device-server communication in SFL.

\subsection{Methods}
The three methods developed in \MethodName\ are presented. 

\subsubsection{Unidirectional Inter-Block Training to Eliminate the On-device Computation and Device-server Communication Trade-off}
\label{subsubsec:uit}

Classic SFL relies on end-to-end backpropagation (BP)-based training. In each training iteration, data is transferred from the device block to the server block during the forward pass, while gradients propagate back during the backward pass. This results in a fundamental trade-off between on-device computation and device-server communication, which cannot be simultaneously optimized at the same split point as presented in Section~\ref{subsubsec:challenge1}.

In contrast, \MethodName\ develops \textbf{unidirectional inter-block training (UIT)}, treating the device block and the server block as independent modules trained sequentially rather than iteratively. UIT transfers activations from the device to the server only once during training and no gradients are sent from the server to the device. This approach eliminates the aforementioned trade-off by ensuring that on-device computation and device-server communication are optimized simultaneously at the same split point.

Split point $p$ refers to the number of layers in the device block. Similar to BP, on-device computation in UIT increases with the number of layers in the device block $\theta^{(d)}$ (increasing $p$). Thus, placing only the first layer on the device ($p=1$) reduces computation.

However, the trend of UIT communication with respect to the split point in \MethodName\ differs from BP. UIT communication consists of: (i)~Model exchanges for aggregating device blocks in Step~{\color{white}\cir{3}}. (ii)~A one-time activation transfer in Step~{\color{white}\cir{4}}.

We measure the parameter size $s^l_i$ and output (activation) size $s^o_i$ for each layer $i \in [1, I]$, where $I$ is the total number of layers. The total communication volume $C$ over $N$ training epochs is:
\begin{equation}
    C = 2N \sum_{i=1}^p s^l_i + s^o_p
    \label{eq:comm}
\end{equation}
where the first term accounts for model exchanges occurring twice per epoch, while the second term represents the one-time activation transfer. Since training typically requires $N \ge 100$ epochs, the activation transfer cost $s^o_p$ becomes negligible (see Section~\ref{subsec:comm_analysis} for a detailed analysis), and $C$ is dictated by $\sum_{i=1}^p s^l_i$, which increases with $p$. Thus, $p=1$ minimizes the device-server communication.

Figure~\ref{fig:computation_communication_fedgrail} shows the communication volume and on-device computation per training round for MobileNet-Large on CIFAR-10. For BP the lowest amount of on-device computation and device-server communication incurred are achieved at different split points, $p=1$ and $p=14$, respectively. In contrast, UIT demonstrates a consistent trend -- both increase with $p$. As a result, the optimal split point $p_{opt}=1$ of UIT simultaneously minimizes both computation and communication, thereby addressing \textbf{Challenge 1}.

\begin{figure}[tp]
    \centering
   \begin{tikzpicture}[scale=1]

    \begin{axis}[
        xlabel={Split Point},
        xlabel style={align=center, font=\footnotesize},
        height=4cm,
        width=8cm,
        axis y line*=left,
        enlarge x limits=0.1,
        symbolic x coords={1, 2, 3, 4, 5, 6, 7, 8, 9, 10, 11, 12, 13, 14, 15, 16, 17, 18, 19},
        ymin=0, ymax=4,
        xticklabel style={font=\footnotesize},
        yticklabel style={font=\footnotesize},
        ylabel={Model Accuracy (\%)},
        ylabel style={at={(-0.05,0.5)}, font=\footnotesize},
        ylabel={\parbox{2cm}{\centering Communication\\Volume (GB)}},
        xtick=data,
        every node near coord/.append style={font=\small},
        legend style={at={(0.23,1.2)}, anchor=west, legend columns=2, font=\footnotesize, draw=none}
    ]
    \addplot[Green, thick] coordinates {
        (1, 0.0366)
        (2, 0.0366)
        (3, 0.0137)
        (4, 0.0137)
        (5, 0.0057)
        (6, 0.0057)
        (7, 0.0129)
        (8, 0.0129)
        (9, 0.0129)
        (10, 0.0229)
        (11, 0.0229)
        (12, 0.044)
        (13, 0.074)
        (14, 0.1115)
        (15, 0.1815)
        (16, 0.2615)
        (17, 0.2786)
        (18, 0.3915)
        (19, 0.38)
    };
    \addlegendentry{UIT Communication}

    \addplot[Red, thick] coordinates {
        (1, 3.66)
        (2, 3.66)
        (3, 1.37)
        (4, 1.37)
        (5, 0.57)
        (6, 0.57)
        (7, 0.58)
        (8, 0.3)
        (9, 0.3)
        (10, 0.31)
        (11, 0.31)
        (12, 0.44)
        (13, 0.47)
        (14, 0.26)
        (15, 0.33)
        (16, 0.41)
        (17, 1.13)
        (18, 1.53)
        (19, 0.38)
    };
    \addlegendentry{BP Communication}
    \end{axis}

    \begin{axis}[
        height=4cm,
        width=8cm,
        axis y line*=right,
        axis x line=none,
        enlarge x limits=0.1,
        symbolic x coords={1, 2, 3, 4, 5, 6, 7, 8, 9, 10, 11, 12, 13, 14, 15, 16, 17, 18, 19},
        ymin=0, ymax=150,
        xticklabel style={font=\footnotesize},
        yticklabel style={font=\footnotesize},
        ylabel={Model Accuracy (\%)},
        y label style={at={(axis description cs:1.08,0.5)},font=\footnotesize},
        ylabel={\parbox{2cm}{\centering Computation\\(GFLOPs)}},
        legend style={at={(0.255,1.25)}, anchor=east, legend columns=2, font=\footnotesize, draw=none}
    ]
    \addplot[Blue, thick] coordinates {
        (1, 2.44)
        (2, 5.94)
        (3, 13.71)
        (4, 19.02)
        (5, 22.65)
        (6, 26.75)
        (7, 30.84)
        (8, 35.38)
        (9, 37.97)
        (10, 40.35)
        (11, 42.37)
        (12, 52.26)
        (13, 68.41)
        (14, 80.59)
        (15, 95.43)
        (16, 110.27)
        (17, 113.21)
        (18, 136.08)
        (19, 136.32)
    };
    \addlegendentry{Computation}
    \end{axis}
    \end{tikzpicture} 
    \caption{Communication volume and on-device computation per training round for MobileNet-Large using BP and UIT for different split points.}
    \label{fig:computation_communication_fedgrail}
\end{figure}
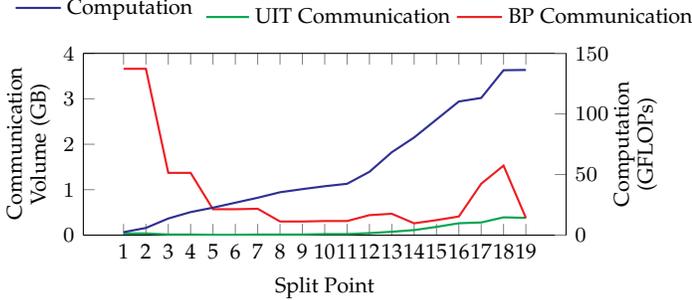

\subsubsection{Lightweight Auxiliary Network Generation to Break Iterative Dependency Between Device and Server Training}
\label{subsubsec:aux}

When using BP in SFL, the device and server blocks are optimized using a global loss function. This is the primary reason for the tight interdependency between the device and server due to frequent iterative transfers of activations and gradients. A large communication volume and frequency is the resultant as highlighted in Section~\ref{subsubsec:challenge2}.

To break this dependency, \MethodName\ introduces a method to \textbf{generate an auxiliary network} $\tilde{\theta}^{(d)}$ that enables independent training of $\theta^{(d)}$. This network connects $\theta^{(d)}$ to a local loss function, allowing devices to compute gradients without requiring gradients from the server.

Since $\tilde{\theta}^{(d)}$ runs on resource-constrained devices, it must be lightweight to keep the computational overhead low while capturing meaningful feature representations for training the server block. To achieve this, a compact two-layer design is followed. The first layer replicates the first layer in $\theta^{(s)}$ but halves the dimension. The second layer is a fully connected layer with the same loss function as in $\theta^{(s)}$.

The second layer of $\tilde{\theta}^{(d)}$ provides a direct mapping to the loss function so that local training is feasible. The first layer ensures $\theta^{(d)}$ captures generalizable features from raw data, which is essential for the effective learning of the subsequent layers in the server block~\cite{zeiler2014visualizing}. Without this layer, $\tilde{\theta}^{(d)}$ would rely on a fully connected layer, forcing $\theta^{(d)}$ to extract task-specific abstract representations. However, since activations are shallow representations for server block training, such features do not generalize well across devices and degrade overall model performance~\cite{yosinski2014transferable,lecun2015deep}.

The dimension of both layers in $\tilde{\theta}^{(d)}$ impacts both model accuracy and computational cost. The method that generates the auxiliary network aims to reduce on-device computation, for which the auxiliary network’s dimension is scaled down relative to the first layer of the server block. Figure~\ref{fig:aux_dimension} empirically demonstrates the effect of different dimension ratios of the auxiliary network relative to the first layer of the server block on MobileNet-Large trained on CIFAR-10. As the dimension ratio increases from 0.25 to 1.0, computational overhead grows linearly, while accuracy improvements become marginal beyond 0.5. Therefore, 0.5 is the selected ratio in \MethodName\ but allows users to adjust it.

\begin{figure}[tp]
    \centering
   \begin{tikzpicture}[scale=1]

    \begin{axis}[
        xlabel={Dimension Ratio of Auxiliary Network},
        xlabel style={align=center, font=\footnotesize},
        height=4cm,
        width=8cm,
        axis y line*=left,
        enlarge x limits=0.1,
        symbolic x coords={0.25,0.5,0.75,1.0},
        ymin=50, ymax=70,
        xticklabel style={font=\footnotesize},
        yticklabel style={font=\footnotesize},
        ylabel style={at={(-0.05,0.5)}, font=\footnotesize},
        ylabel={\parbox{2cm}{\centering Model\\Accuracy (\%)}},
        xtick=data,
        every node near coord/.append style={font=\small},
        legend style={at={(0.5,1.2)}, anchor=west, legend columns=2, font=\footnotesize, draw=none}
    ]
    \addplot[Green, thick] coordinates {
        (0.25, 62.42)
        (0.5, 67.3375)
        (0.75, 67.6375)
        (1.0, 68.35)
    };
    \addlegendentry{Model Accuracy}
    \end{axis}

    \begin{axis}[
        height=4cm,
        width=8cm,
        axis y line*=right,
        axis x line=none,
        enlarge x limits=0.1,
        symbolic x coords={0.25,0.5,0.75,1.0},
        ymin=0, ymax=7,
        xticklabel style={font=\footnotesize},
        yticklabel style={font=\footnotesize},
        y label style={at={(axis description cs:1.05,0.5)},font=\footnotesize},
        ylabel={\parbox{2cm}{\centering Computation\\(GFLOPs)}},
        legend style={at={(0.5,1.2)}, anchor=east, legend columns=2, font=\footnotesize, draw=none}
    ]
    \addplot[Red, thick] coordinates {
        (0.25, 1.44)
        (0.5, 2.89)
        (0.75, 4.33)
        (1.0, 5.78)
    };
    \draw[dashed] (axis cs:0.5,0) -- (axis cs:0.5,100);
    \addlegendentry{Computation}
    \end{axis}
    \end{tikzpicture} 
    \caption{Impact of varying auxiliary network dimensions on on-device computation per round and final model accuracy. The x-axis represents the dimension ratio of auxiliary network relative to the first layer of the server block.}
    \label{fig:aux_dimension}
\end{figure}
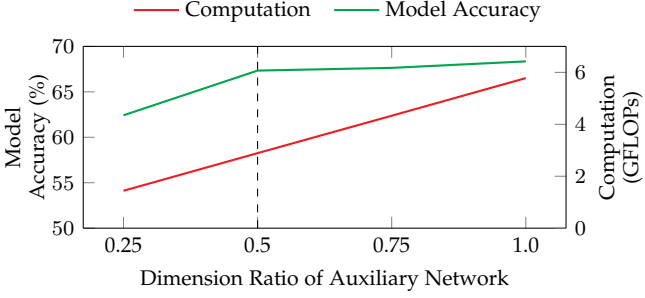

By decoupling device and server training using this lightweight auxiliary network, \MethodName\ reduces iterative activation transfers to a one-shot transfer, significantly minimizing communication overhead and addressing \textbf{Challenge~2}.

\subsubsection{Activation Consolidation to Mitigate Impact of Data Heterogeneity}
\label{subsubsec:act}

\MethodName\ for the first time addresses activation heterogeneity when collaboratively training multiple devices by introducing \textbf{activation consolidation} on the server. Unlike SFL systems, where the server block is trained on the activations generated from non-IID local data across devices, \MethodName\ trains the server block on the activations produced by a converged and consistent device block using data from all devices. This mitigates the heterogeneity of activations and improves model accuracy. 

In a typical SFL system, each device $k$ generates activations from its local dataset $\mathcal{D}_k$, leading to skewed intermediate feature representations across different devices. When the server trains $K$ separate server blocks, each on the activations from a single device, the non-IID activation distributions lead to inconsistent feature learning across server blocks, reducing generalization and overall model accuracy. SplitFedV2, a variant of SplitFed~\cite{splitfed}, removes server-side model aggregation by training a single server block. However, the server block is still trained on activations uploaded in every training iteration, although the device blocks are often not fully trained. As the parameters of the device blocks continue to change during each training iteration, the resulting activations for the same input will also change. This adversely impacts the server block's accuracy. 

To mitigate the impact of activation heterogeneity, \MethodName\ takes a fundamentally different approach by performing one-shot activation consolidation after the convergence of the device block. Each device generates activations using the same converged device block and uploads them once to the server. The server then forms a unified activation set $\mathcal{A}$:
\begin{equation}
    \mathcal{A}=\left\{\left(\xi_i,y_i\right);\xi_i=f^{d}\left(\theta^{(d)}; x_i\in \bigcup\limits_{k=1}^{K} \mathcal{D}_k\right)\right\}
    \label{eq:act_compute}
\end{equation}
where $\xi_i$ represents the activation generated from input $x_i$ by the converged device block $f^{d}(\theta^{(d)})$, and $y_i$ is its label. Subsequently, a single server block is trained on the unified activation set $\mathcal{A}$. 

This approach reduces activation heterogeneity and improves server-side generalization over existing SFL systems for two reasons. First, since all activations are generated using the same converged device block, the representation mapping is identical across devices. This eliminates inconsistencies that would otherwise arise from model parameter variations. Second, by consolidating activations from all devices, the unified activation set $\mathcal{A}$ better approximates the global data distribution. This improves the diversity and balance of the training set seen by the server block, enabling it to learn more generalizable representations.

\subsection{Training Algorithm}

Training in \MethodName\ follows a dual-step approach (as shown in Algorithm~\ref{alg:train}): (1) the device block is trained, similar to FL, alongside an additional auxiliary network, and (2) the server trains a single server block in a centralized manner. 

\textbf{Device Training}: The device block $\theta^{(d)}$ is trained locally with the auxiliary network $\tilde{\theta}^{(d)}$. On each device $k$, $\theta^{(d)}$ and $\tilde{\theta^{(d)}}$ are trained iteratively, minimizing the objective:
\begin{equation}
    \min_{\theta^{(d)}} F^{(d)}(\theta^{(d)}) = \sum_{k=1}^{K} \frac{n_k}{n} F^{(d)}_k(\theta^{(d)},\tilde{\theta}^{(d)})
\end{equation}
where $F^{(d)}_k$ is the local loss function on device $k$: 
\begin{equation}
F^{(d)}_k\left(\theta^{(d)},\tilde{\theta}^{(d)}\right) = \frac{1}{n_k} \sum_{i=1}^{n_k} \tilde{f}^{(d)}\left(\tilde{\theta}^{(d)}_k;f^{(d)}\left(\theta^{(d)}_k; x_i\right), y_i\right)
\end{equation}

Each iteration involves sampling a mini-batch from the local dataset (Line~4) and computing activations $\xi_i$ via the feed-forward function $f^{(d)}$. The auxiliary network $\tilde{f}^{(d)}$ then computes the local loss and updates $\theta^{(d)}$ and $\tilde{\theta}^{(d)}$ (Line~5): 
\begin{equation}
\begin{split}
    \theta^{(d)}_k \leftarrow \theta^{(d)}_k - \eta^{(d)} \nabla F^{(d)}_k\left(\theta^{(d)}_k\right)\\
    \tilde{\theta}^{(d)}_k \leftarrow \tilde{\theta}^{(d)}_k - \eta^{(d)} \nabla F^{(d)}_k\left(\tilde{\theta}^{(d)}_k\right)
    \label{eq:fedgrail_update}
\end{split}
\end{equation}
where $\eta^{(d)}$ is the learning rate of the device block.

The updated local models are then sent to the server (Line~6) and aggregated using (Lines~13-14): 
\begin{equation}
    \theta^{(d)} \leftarrow \sum_{k=1}^{K} \frac{n_k}{n} \theta^{(d)}_k, \enspace 
    \tilde{\theta}^{(d)} \leftarrow \sum_{k=1}^{K} \frac{n_k}{n} \tilde{\theta}^{(d)}_k
    \label{eq:fedgrail_agg}
\end{equation}

The global model is then sent to the devices (Line~15) and update the local models (Line~8). This process repeats until $\left\{\theta^{(d)},\tilde{\theta^{(d)}}\right\}$ converges or reaches $N^{(d)}$ epochs.

\textbf{Activation Transfer}: 
Once device block training completes, each device generates activations by feeding local data into $\theta^{(d)}$ and sends them with their labels to the server (Line~10), consolidating an activation set $\mathcal{A}$ (Equation~\labelcref{eq:act_compute}).

\textbf{Server Training}: 
The server performs two tasks: (i) storing activations from the devices on disk (Line~16) and (ii) loading activation batches for training the server block (Lines~17-19). Since devices send activations at different rates, waiting for all activations before training the server block introduces idle time. To mitigate this, these tasks run asynchronously in parallel, allowing training to begin as soon as the first batch of activations is available, improving training efficiency.

The server block $\theta^{(s)}$ is trained on the consolidated activation set to optimize the following objective:
\begin{equation}
    \min_{\theta^{(s)}} F^{(s)}\left(\theta^{(s)}\right) = \frac{1}{|\mathcal{A}|} \sum_{i=1}^{|\mathcal{A}|} f^{(s)}\left(\theta^{(s)}; \xi_i, y_i\right)
    \label{eq:goal}
\end{equation}
where $f^{(s)}$ is the loss function on server. $\theta^{(s)}$ is updated via: 
\begin{equation}
    \theta^{(s)} \leftarrow \theta^{(s)} - \eta^{(s)} \nabla F^{(s)}\left(\theta^{(s)}\right)
    \label{eq:server_update}
\end{equation}
where $\eta^{(s)}$ is the learning rate of the server block. Training continues until convergence or reaches $N^{(s)}$ epochs.


\begin{algorithm}[t]
    \caption{Training in \MethodName}
	\label{alg:train}
    
    \SetKwFunction{FMain}{DeviceTraining}
    \SetKwProg{Fn}{Function}{\string:}{}
    
    \tcc{Training device blocks in parallel}
    \Fn{\FMain{$k$, $\mathcal{D}_k$, $N^{(d)}$, $B^{(d)}$}}{
        \KwIn{device index $k\in[K]$, dataset $\mathcal{D}_k$, \#device epochs $N^{(d)}$, \#iterations per epoch $H$, batch size $B^{(d)}$}
        \KwOut{trained device block $\theta^{(d)}$}
        
        \For{$n \gets 0$ \KwTo $N^{(d)}$}{
            \For{$h \gets 0$ \KwTo $H$}{
                Sample mini-batch $\left\{\bm{x}_h,\bm{y}_h\right\} \in \mathcal{D}_k$ of size $B^{(d)}$
                
                Update $\theta^{(d)}_k$ and $\tilde{\theta}^{(d)}_k$ (Equation~\labelcref{eq:fedgrail_update})
            }
            Send $\left\{\theta^{(d)}_k,\tilde{\theta}^{(d)}_k\right\}$ to server
            
            Receive global models $\left\{\theta^{(d)},\tilde{\theta}^{(d)}\right\}$
            
            Update local models: $\left\{\theta^{(d)}_k,\tilde{\theta}^{(d)}_k\right\} \leftarrow \left\{\theta^{(d)},\tilde{\theta}^{(d)}\right\}$
        }
        
        \tcp{Send activations for server training}
        \ForEach{$\left\{\bm{x}_i,\bm{y}_i\right\} \in \mathcal{D}_k$}{
            Compute activations $\bm{\xi}_i$ (Equation~\labelcref{eq:act_compute}) and send to server
        }
    }
    
    \BlankLine
    \tcc{Training the server block}
    \SetKwFunction{FMain}{ServerTraining}
    \SetKwProg{Fn}{Function}{\string:}{}
    \Fn{\FMain{$N^{(d)},N^{(s)},B^{(s)},S$}}{
        \KwIn{\#device epochs $N^{(d)}$, \#server epochs $N^{(s)}$, batch size $B^{(s)}$, activation size $S$}
        \KwOut{trained server block $\theta^{(s)}$}
        
        \tcp{Aggregate device blocks}
        \For{$n \gets 0$ \KwTo $N^{(d)}$}{
            Receive $\left\{\theta^{(d)}_k,\tilde{\theta}^{(d)}_k;k\in[K]\right\}$ from devices
            
            Aggregate into $\theta^{(d)}$ and $\tilde{\theta}^{(d)}$ (Equation~\labelcref{eq:fedgrail_agg})
            
            Send $\left\{\theta^{(d)},\tilde{\theta}^{(d)}\right\}$ to devices
        }
        
        \tcp{Subprocess 1: Receive and store activations}        
        Run \texttt{StoreActivation}($\left\{\bm{\xi}_i,\bm{y}_i\right\}$)
        
        \tcp{Subprocess 2: Load activations to train the server block}
        \For{$n \gets 0$ \KwTo $N^{(s)}$}{
            \ForEach{$\left\{\bm{\xi}_i,\bm{y}_i\right\} \in$ \textnormal{\texttt{LoadActivation}()}}{
                Update $\theta^{(s)}$ (Equation~\labelcref{eq:server_update})
            }
        }
    }
\end{algorithm}

\section{Convergence and Communication Cost Analysis}
\label{sec:analysis}
This section analyzes the convergence of \MethodName, and demonstrates that \MethodName\ incurs lower communication costs than both FL and SFL.

\subsection{Convergence Analysis}
\label{subsec:converge}

The convergence of the device block and server block are analyzed based on the general assumptions of FL and SFL systems.

\subsubsection{Notation} 
The parameters of device blocks and the auxiliary networks are denoted as 
$\varphi=\left(\theta^{(d)},\,\tilde{\theta}^{(d)}\right)$. $\varphi_k^t$ denotes the parameters of $t$ iterations on device $k$, where $t\in\left[T^{(d)}\right],T^{(d)}=N^{(d)}H$. The objective of device block training becomes:
\begin{equation}
    \min F^{(d)}(\varphi) = \sum_{k=1}^{K} \frac{n_k}{n} F^{(d)}_k(\varphi)
\end{equation}

$F^{(d)}_k$ is the local loss function on device $k$, rewritten as: 
\begin{equation}
F^{(d)}_k\left(\varphi\right) = \frac{1}{n_k} \sum_{i=1}^{n_k} f(\varphi_k;x_i,y_i)
\end{equation}
where
\begin{equation}
f(\varphi_k;x_i,y_i) =\tilde{f}^{(d)}\left(\tilde{\theta}^{(d)}_k;f^{(d)}\left(\theta^{(d)}_k; x_i\right), y_i\right)
\end{equation}

We assume the dataset $\mathcal{D}_k$ on device $k\in[K]$ follows the distribution $p_k(x)$. The non-IID degree is quantified as:
\begin{equation}
    \Gamma=F^{(d),*}-\sum_{k=1}^K \frac{n_k}{n} F^{(d),*}_k
\end{equation}
where $F^{(d),*}$ and $F^{(d),*}_k$ denotes the minimum values of $F^{(d)}$ and $F^{(d)}_k$. Larger value of $\Gamma$ indicates a higher degree of non-IID; $\Gamma=0$ indicates IID data.

The distribution of $\xi$ is denoted as $q(\xi)$, where $\xi=f^{d}(x)$. Let $q^{*}(\xi)$ denote the density of the activations from the converged device block. We measure the distance between $q(\xi)$ and $q^{*}(\xi)$ as:
\begin{equation}
\label{eq:divergence}
    c=\int|q(\xi)-q^*(\xi)|d\xi
\end{equation}

For the sake of simplicity, we use $\theta_t$ to denote the parameters of the server block at iteration $t\in\left[T^{(s)}\right]$, replacing $\theta^{(s)}$, where $T^{(s)}=N^{(s)}\lfloor \frac{n}{B^{(s)}}\rfloor$ denotes the total number of iterations of training the server block.

\subsubsection{Assumptions} 

The following general assumptions are made in the literature on FL and SFL systems~\cite{Li2020On,decoupled,doi:10.1137/16M1080173,NEURIPS2018_a36b598a,10.5555/2567709.2567769,stich2018local,10.5555/3327345.3327357,10.1609/aaai.v33i01.33015693}:

\begin{assumption}[$L$-smoothness~\cite{Li2020On,doi:10.1137/16M1080173,NEURIPS2018_a36b598a}]
\label{assump:lsmooth}
For $\forall k\in[K]$, $F^{(d)}_k(\varphi)$ and $F^{(s)}(\theta)$ is $L$-smooth, i.e., $\exists L$, $\forall \varphi,\varphi^\prime$, such that
\begin{equation}
    \bigl\|\nabla F^{(d)}_k(\varphi) - \nabla F^{(d)}_k(\varphi^\prime)\bigr\| \le L\bigl\|\varphi - \varphi^\prime\bigr\|
\end{equation}
and $\forall \theta,\varphi^\theta$, such that
\begin{equation}
    \bigl\|\nabla F^{(s)}(\theta) - \nabla F^{(s)}(\theta^\prime)\bigr\| \le L\bigl\|\theta - \theta^\prime\bigr\|
\end{equation}
Consequently, $F^{(d)}(\varphi)$ is also $L$-smooth.
\end{assumption}

\begin{assumption}[$\mu$-strong convexity~\cite{Li2020On}]
\label{assump:convex}
For $\forall k\in[K]$, $F^{(d)}_k(\varphi)$ is $\mu$-strongly convex, i.e., $\exists \mu$, $\forall \varphi,\varphi^\prime$, such that
\begin{equation}
    \bigl\|\nabla F^{(d)}_k(\varphi) - \nabla F^{(d)}_k(\varphi^\prime)\bigr\| \ge \mu\bigl\|\varphi - \varphi^\prime\bigr\|
\end{equation}
Consequently, $F(\varphi)$ is also $\mu$-strongly convex.
\end{assumption}

\begin{assumption}[Bounded Gradient Variance~\cite{Li2020On,10.5555/2567709.2567769,stich2018local,10.5555/3327345.3327357,10.1609/aaai.v33i01.33015693}]
\label{assump:bounded_variance}
For $\forall k\in[K]$ and $\forall t\in[T^{(d)}]$, the variance of gradients of $F^{(d)}_k$ is bounded, i.e., $\forall x\in\mathcal{D}_k$, $\exists \sigma_k$ such that
\begin{equation}
    \mathbb{E}\left[\bigl\|\nabla F^{(d)}_k(\varphi_k^t;x)-\nabla F^{(d)}_k(\varphi_k^t) \bigr\|^2\right] \le \sigma_k^2
\end{equation}

\end{assumption}

\begin{assumption}[Bounded Gradient Norm~\cite{Li2020On,decoupled,doi:10.1137/16M1080173,NEURIPS2018_a36b598a,10.5555/2567709.2567769,stich2018local,10.5555/3327345.3327357,10.1609/aaai.v33i01.33015693}]
\label{assump:bounded_norm}
For $\forall k\in[K]$ and $\forall t\in[T^{(d)}]$, the expected squared norm of gradients of $F^{(d)}_k$ and $F^{(s)}$ is uniformly bounded, i.e., $\forall x\in\mathcal{D}_k$, $\exists \sigma_k$ such that
\begin{equation}
    \mathbb{E}\left[\bigl\|\nabla F^{(d)}_k(\varphi_k^t;x)\bigr\|^2\right] \le G
\end{equation}
and
\begin{equation}
    \mathbb{E}\left[\bigl\|\nabla F^{(s)}(\varphi_k^t;x)\bigr\|^2\right] \le G
\end{equation}
\end{assumption}

\begin{assumption}[Robbins-Monro Conditions~\cite{Robbins1951ASA,decoupled,doi:10.1137/16M1080173,NEURIPS2018_a36b598a}]
\label{assump:robbin}
The learning rate of the server block $\eta^{(s)}_t$ satisfies $\sum_t\eta^{(s)}_t=\infty$ yet $\sum_t\left(\eta^{(s)}_t\right)^2<\infty$.
\end{assumption}

\subsubsection{Convergence of the device block}

Let Assumption~\ref{assump:lsmooth} to Assumption~\ref{assump:bounded_norm} hold, then we have

\begin{theorem}[Device Block Convergence]
\label{theorem:device-converge}
If the learning rate is chosen as $\eta^{(d)}_t=\frac{2}{\mu(\gamma+t)}$, then
\begin{equation}
\begin{split}
    \mathbb{E}&\left[F^{(d)}(\varphi_{T^{(d)}})\right] - F^{(d),*} \le \\
    & \frac{\kappa}{\gamma+T^{(d)}-1}\left(\frac{2B}{\mu}+\frac{\mu\gamma}{2}\mathbb{E}\left[\bigl\|\varphi_0-\varphi^*\bigr\|^2\right]\right)
\end{split}
\end{equation}
where $\kappa=\frac{L}{\mu}$, $\gamma=\max\left\{8\kappa,H\right\}$ and $B=\sum_{k=1}^K\left(\frac{n_k}{n}\right)^2\sigma_k^2+6L\Gamma+8(H-1)^2G$.
\end{theorem}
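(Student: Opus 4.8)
The plan is to recognize that device-block training in \MethodName\ is, from an optimization standpoint, exactly FedAvg applied to the composite parameter vector $\varphi=(\theta^{(d)},\tilde{\theta}^{(d)})$: each device runs $H$ local SGD steps on $F^{(d)}_k$ and the server periodically averages with weights $n_k/n$. Since Assumptions~\ref{assump:lsmooth}--\ref{assump:bounded_norm} are precisely the hypotheses of the FedAvg convergence theorem of Li et al.~\cite{Li2020On}, I would establish the bound by replaying that analysis with $\varphi$ in place of the FL weights. The first step is to introduce the virtual averaged sequence $\bar\varphi^t=\sum_{k=1}^K\frac{n_k}{n}\varphi_k^t$, which coincides with the broadcast global model at aggregation steps and otherwise tracks a notional average during each local-update window. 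Writing the aggregate stochastic update as $\bar\varphi^{t+1}=\bar\varphi^t-\eta^{(d)}_t g_t$ with $g_t=\sum_k\frac{n_k}{n}\nabla F^{(d)}_k(\varphi_k^t;x)$ and conditional mean $\bar g_t=\sum_k\frac{n_k}{n}\nabla F^{(d)}_k(\varphi_k^t)$ sets up the descent argument.

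The core is a one-step contraction on $\mathbb{E}\|\bar\varphi^t-\varphi^*\|^2$. Expanding $\|\bar\varphi^t-\eta^{(d)}_t g_t-\varphi^*\|^2$ and taking expectations, I would bound the resulting terms using the stated assumptions: (i) the stochastic-noise term via Assumption~\ref{assump:bounded_variance}, giving $\mathbb{E}\|g_t-\bar g_t\|^2\le\sum_k(\frac{n_k}{n})^2\sigma_k^2$; (ii) the inner-product term via $\mu$-strong convexity and $L$-smoothness together with the non-IID gap $\Gamma$, which contributes the $6L\Gamma$ summand; and (iii) the client-drift term $\sum_k\frac{n_k}{n}\mathbb{E}\|\bar\varphi^t-\varphi_k^t\|^2$, controlled by a separate lemma. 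That lemma exploits the fact that the local iterates have gone unsynchronized for at most $H-1$ steps, so together with Assumption~\ref{assump:bounded_norm} it yields a bound of order $(\eta^{(d)}_t)^2(H-1)^2G$, contributing the $8(H-1)^2G$ summand. Collecting these produces the recursion $\mathbb{E}\|\bar\varphi^{t+1}-\varphi^*\|^2\le(1-\mu\eta^{(d)}_t)\mathbb{E}\|\bar\varphi^t-\varphi^*\|^2+(\eta^{(d)}_t)^2 B$ with $B$ exactly as defined.

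Finally, with the diminishing step size $\eta^{(d)}_t=\frac{2}{\mu(\gamma+t)}$ and $\gamma=\max\{8\kappa,H\}$ (the latter ensuring $\eta^{(d)}_t\le\frac{1}{4L}$ so the lemmas apply), I would solve this recursion by induction to show $\mathbb{E}\|\bar\varphi^t-\varphi^*\|^2\le\frac{v}{\gamma+t}$ with $v=\max\{\frac{4B}{\mu^2},\gamma\mathbb{E}\|\varphi_0-\varphi^*\|^2\}$, and then convert the parameter gap into the objective gap through $L$-smoothness, $\mathbb{E}[F^{(d)}(\bar\varphi^t)]-F^{(d),*}\le\frac{L}{2}\mathbb{E}\|\bar\varphi^t-\varphi^*\|^2$, which introduces the factor $\kappa=L/\mu$ and yields the stated bound at $t=T^{(d)}$. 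The main obstacle I anticipate is the client-drift lemma: bounding the accumulated divergence of the per-device iterates from their average over an aggregation window requires tracking that the divergence resets at each synchronization and grows at a rate governed by $G$, and it is here that the factor $(H-1)^2$ and the requirement $\gamma\ge H$ enter. A secondary check, specific to \MethodName, is verifying that the composite $\varphi=(\theta^{(d)},\tilde{\theta}^{(d)})$ genuinely satisfies the smoothness and convexity hypotheses, since the auxiliary network is trained jointly with the device block; once these are granted on $\varphi$, the remainder is the standard FedAvg argument.
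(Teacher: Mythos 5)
Your proposal is correct and matches the paper's approach exactly: the paper proves this theorem simply by citing Theorem~1 of Li et al.~\cite{Li2020On}, observing (as you do) that device-block training on the composite parameter $\varphi=(\theta^{(d)},\tilde{\theta}^{(d)})$ is FedAvg under Assumptions~\ref{assump:lsmooth}--\ref{assump:bounded_norm}. Your outline is a faithful reconstruction of that cited argument — virtual averaged sequence, one-step contraction yielding $B$, the $(H-1)^2G$ drift lemma, induction under $\eta^{(d)}_t=\frac{2}{\mu(\gamma+t)}$, and conversion via $L$-smoothness — including the one point genuinely specific to \MethodName, namely that the hypotheses must hold for the joint device-block/auxiliary-network objective, which the paper's assumptions impose directly on $F^{(d)}_k(\varphi)$.
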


\begin{proof}
    This is proven in the literature (see Theorem~1)~\cite{Li2020On}.
\end{proof}

\subsubsection{Convergence of the server block}

Let Assumption~\ref{assump:lsmooth}, Assumption~\ref{assump:bounded_norm} and Assumption~\ref{assump:robbin} hold, then we have

\begin{theorem}[Server Block Convergence]
\label{theorem:server-converge}
Each term of the following equation converges:
\begin{equation}
\label{eq:server-converge}
\sum_{t=0}^{T^{(s)}}\eta^{(s)}_t\mathbb{E}\left[\bigl\|\nabla F^{(s)}(\theta_t)\bigr\|^2\right] \le \mathbb{E}\left[F^{(s)}(\theta_0)\right]+\frac{GL}{2}\sum_{t=0}^{T^{(s)}}\left(\eta^{(s)}_t\right)^2
\end{equation}
\end{theorem}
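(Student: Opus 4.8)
The plan is to recognise that, by the activation-consolidation design, the server block is trained by plain centralised SGD on the unified set $\mathcal{A}$, so the statement is a standard non-convex SGD descent bound that follows from $L$-smoothness (Assumption~\ref{assump:lsmooth}), the bounded gradient norm (Assumption~\ref{assump:bounded_norm}), and the Robbins-Monro step-size conditions (Assumption~\ref{assump:robbin}). First I would write the update as $\theta_{t+1}=\theta_t-\eta^{(s)}_t g_t$ with $g_t$ an unbiased stochastic gradient, $\mathbb{E}[g_t\mid\theta_t]=\nabla F^{(s)}(\theta_t)$, and apply the $L$-smoothness descent lemma:
\begin{equation}
F^{(s)}(\theta_{t+1}) \le F^{(s)}(\theta_t) - \eta^{(s)}_t\langle\nabla F^{(s)}(\theta_t),g_t\rangle + \frac{L(\eta^{(s)}_t)^2}{2}\bigl\|g_t\bigr\|^2.
\end{equation}

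Next I would take the conditional expectation given $\theta_t$. Unbiasedness turns the inner-product term into $-\eta^{(s)}_t\mathbb{E}[\|\nabla F^{(s)}(\theta_t)\|^2]$, and the bounded gradient norm $\mathbb{E}[\|g_t\|^2]\le G$ bounds the quadratic term by $\tfrac{GL}{2}(\eta^{(s)}_t)^2$. Rearranging isolates the weighted gradient norm $\eta^{(s)}_t\mathbb{E}[\|\nabla F^{(s)}(\theta_t)\|^2]$ as a one-step decrease $\mathbb{E}[F^{(s)}(\theta_t)]-\mathbb{E}[F^{(s)}(\theta_{t+1})]$ plus the $\tfrac{GL}{2}(\eta^{(s)}_t)^2$ correction. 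Summing over $t=0,\dots,T^{(s)}$ telescopes the objective differences, and since the loss is bounded below I may drop the non-negative terminal term $\mathbb{E}[F^{(s)}(\theta_{T^{(s)}+1})]$, which yields exactly the bound in Equation~\eqref{eq:server-converge}.

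To conclude convergence I would invoke Assumption~\ref{assump:robbin}: the condition $\sum_t(\eta^{(s)}_t)^2<\infty$ makes the right-hand side of \eqref{eq:server-converge} uniformly bounded in $T^{(s)}$, so the partial sums of the non-negative series $\sum_t\eta^{(s)}_t\mathbb{E}[\|\nabla F^{(s)}(\theta_t)\|^2]$ are monotone and bounded, hence convergent; together with $\sum_t\eta^{(s)}_t=\infty$ this forces $\liminf_t\mathbb{E}[\|\nabla F^{(s)}(\theta_t)\|^2]=0$, so the server block reaches a stationary point.

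I expect the main obstacle to be conceptual rather than computational: the theorem speaks of ``each term'' converging, which I read as the convergence of both sides of \eqref{eq:server-converge}---in particular the non-negative weighted gradient-norm series on the left---as $T^{(s)}\to\infty$. The clean derivation relies on two facts not spelled out among the named assumptions, namely that the server-side stochastic gradient is unbiased and that $F^{(s)}$ is bounded below; the cleanest write-up makes these standard SGD premises explicit up front, after which the descent lemma, the expectation step, and the telescoping are all routine.
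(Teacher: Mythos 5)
Your proof is correct, but it takes a genuinely different route from the paper's. The paper does not re-derive the SGD descent bound at all: it invokes Proposition~3.1 of the decoupled greedy learning literature~\cite{decoupled}, which bounds $\sum_t \eta^{(s)}_t\mathbb{E}\bigl[\|\nabla F^{(s)}(\theta_t)\|^2\bigr]$ for block-wise training where the input distribution feeding the block may still be drifting, with an extra penalty $G\sum_t \eta^{(s)}_t\sqrt{2c_t}$ governed by the distributional distance $c_t$ of Equation~\eqref{eq:divergence}; the whole proof then consists of observing that, because the server block trains only after the device block has converged, $c_t\equiv 0$, the penalty vanishes, and Assumption~\ref{assump:robbin} bounds the right-hand side exactly as you argue. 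Your first-principles derivation (descent lemma, unbiasedness, $\mathbb{E}\bigl[\|g_t\|^2\bigr]\le G$, telescoping) is a valid, self-contained replacement --- it is essentially the proof underlying the cited proposition specialized to a stationary input distribution --- and it has the merit of surfacing two premises the paper leaves implicit: that mini-batches from the consolidated activation set $\mathcal{A}$ yield unbiased estimates of $\nabla F^{(s)}$, and that $F^{(s)}$ is non-negative so the terminal term $\mathbb{E}\bigl[F^{(s)}(\theta_{T^{(s)}+1})\bigr]$ can be dropped to obtain the constant $\mathbb{E}\bigl[F^{(s)}(\theta_0)\bigr]$ rather than $\mathbb{E}\bigl[F^{(s)}(\theta_0)\bigr]-F^{(s),*}$. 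What the paper's route buys instead is the architectural point: the $c_t$ term makes explicit that the bound would degrade by $G\sum_t\eta^{(s)}_t\sqrt{2c_t}$ if server training overlapped with device-block updates (as in SplitFedV2-style systems), so setting $c_t\equiv 0$ is the formal payoff of unidirectional inter-block training --- the very fact your opening sentence asserts (``plain centralised SGD on $\mathcal{A}$'') but treats as given rather than as the content being established. Your reading of ``each term'' converging, and the concluding monotone-bounded-partial-sums plus $\liminf$ argument via $\sum_t\eta^{(s)}_t=\infty$, match the paper's intended conclusion.
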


\begin{proof}
From Proposition~3.1 in the literature~\cite{decoupled}, we have
\begin{equation}
\label{eq:decoupled}
    \begin{split}
        \sum_{t=0}^{T^{(s)}}\eta^{(s)}_t\mathbb{E}&\left[\bigl\|\nabla F^{(s)}(\theta_t)\bigr\|^2\right] \le \\
        &\mathbb{E}\left[F^{(s)}(\theta_0)\right]+G\sum_{t=0}^{T^{(s)}}\eta^{(s)}_t\left(\sqrt{2c_t}+\frac{L\eta^{(s)}}{2}\right)
    \end{split}
\end{equation}

where $c_t$ is the distance defined in Equation~\labelcref{eq:divergence} at the iteration $t$. Since the server block starts being trained after the device block has converged, we have $c_t\equiv0$. By substituting $c_t\equiv0$ into Equation~\labelcref{eq:decoupled}, we obtain Equation~\labelcref{eq:server-converge}. From Assumption~\ref{assump:robbin}, the right term is bounded. Therefore, the server block converges as presented in the literature~\cite{doi:10.1137/16M1080173,NEURIPS2018_a36b598a}.
\end{proof}

\subsection{Communication Cost Analysis}
\label{subsec:comm_analysis}

We further analyze the communication volume of \MethodName\ compared to FL and SFL. Let $s^{(d)}$ and $s^{(s)}$ denote the sizes of the device and server blocks, respectively, with $s = s^{(d)} + s^{(s)}$ representing the complete model size. The auxiliary network size is denoted as $s^{(aux)}$, where $s^{(a)} \ll s^{(s)}$. The total activation size for the dataset is $s^{(act)}$, which typically satisfies $s^o \gg s^{(s)}$ when the split point is placed early. Table~\ref{tab:size} lists the model sizes and activation sizes for various deep neural networks with $p=1$.

\begin{table}
    \centering
    \caption{Model sizes and activation sizes (in GB) for different neural networks with split point $p=1$ on CIFAR-10.}
    \label{tab:size}
    \begin{tabular}{c|c|c|c|c|c}
        \colorrow
        \hline\textbf{DNN Type} & \textbf{Model} &  \textbf{$s^{(act)}$} & \textbf{$s^{(d)}$} & \textbf{$s^{(aux)}$} & \textbf{$s^{(s)}$} \\
        \hline
        \multirow{2}{*}{CNN} & \makecell{MobileNet-\\Large} & 1.53e-1 & 1.34e-5 & 3.47e-5 & 3.18e-2 \\
        \cline{2-6}
        & VGG-11 & 6.09e-1 & 2.04e-5 & 1.19e-3 & 2.10e-1 \\ \hline
        \multirow{2}{*}{Transformer} & Swin-Tiny & 2.29e-1 & 8.83e-4 & 5.75e-4 & 2.04e-1 \\
        \cline{2-6}
        & ViT-Small & 9.28e-1 & 1.34e-2 & 6.83e-3 & 1.46e-1\\
        \hline
    \end{tabular}
\end{table}

By adapting Equation~\labelcref{eq:comm}, the total communication volume of \MethodName\ is:
\begin{equation}
    C_{FedGrail} = 2N \left(s^{(d)}+s^{(aux)}\right) + s^{(act)}
\end{equation}

For classic SFL, communication consists of both (a) device block exchanges and (b) per-iteration activation and gradient transfers:
\begin{equation}
    C_{SFL} = 2N \left(s^{(d)} + s^{(act)}\right)
\end{equation}

The difference in communication volume achieved by \MethodName\ compared to SFL is:
\begin{equation}
    C_{SFL}-C_{FedGrail} = (2N-1)s^{(act)}-2Ns^{(aux)}
\end{equation}

Since $s^o \gg s^{(s)} \gg s^{(aux)}$, this difference is positive, proving that \MethodName\ reduces communication overhead compared to SFL.

FL communication involves exchanging the full model per round:
\begin{equation}
    C_{FL} = 2Ns=2N \left(s^{(d)} + s^{(s)}\right)
\end{equation}

The difference between FL and \MethodName\ is:
\begin{equation}
    C_{FL}-C_{FedGrail} = 2N(s^{(s)}-s^{(aux)})-s^{(act)}
\end{equation}

For all models listed in Table~\ref{tab:size}, $C_{FL}-C_{FedGrail}>0$ holds if training takes $N\ge3$ epochs. This demonstrates that \MethodName\ achieves lower communication overhead than FL. Moreover, as the number of training epochs $N$ increases, the difference of the communication costs between FL and \MethodName\ grows linearly.

\section{Evaluation}
\label{sec:results}
This section extensively evaluates \MethodName\ against relevant baselines. Section~\ref{subsec:setup} describes the experimental setup. Section~\ref{subsec:results} presents the results highlighting that \MethodName\ achieves higher model accuracy within less training time and reduces communication overhead while lowering on-device computation compared to state-of-the-art SFL systems.

\subsection{Experimental Setup}
\label{subsec:setup}

\begin{table*}[t]
    \centering
    \caption{Testbed Configuration}
    \label{tab:testbed}
    \begin{tabular}{c|c|c|c|c|c|c}
        \hline
        \colorrow
        \textbf{Device Group} &\textbf{\#Devices} & \textbf{CPU} & \textbf{CPU Frequency} & \textbf{GPU} & \textbf{GPU Frequency} & \textbf{Memory} \\
        \hline
        \textbf{Group A} & 40 & \multirow{3}{*}{\makecell{Quad-Core ARM\\Cortex-A57}} & \multirow{3}{*}{1.7\,GHz} & NVIDIA GM20B  & 921\,MHz & \multirow{3}{*}{4\,GB Unified} \\
        \textbf{Group B} & 40 &  &  & NVIDIA GM20B  & 640\,MHz & \\
        \textbf{Group C} & 40 &  &  & NVIDIA GM20B  & 320\,MHz & \\
        \hline
        \textbf{Server} & 1 & \makecell{AMD EPYC 7713P\\(128 Cores)} & 2.0\,GHz & NVIDIA A6000 & 1410 MHz & \makecell{256\,GB RAM;\\48\,GB VRAM} \\
        \hline
    \end{tabular}
\end{table*}

This section presents the testbed, baselines, models and datasets used to evaluate \MethodName.

\textbf{Testbed}: \MethodName\ is evaluated on a heterogeneous platform comprising a cluster of 120 Jetson Nano devices (Table~\ref{tab:testbed}), each with 4 GB of unified memory. We create hardware heterogeneity to represent real-world scenarios by allowing devices to operate at three distinct GPU frequencies: 921 MHz (40 devices), 640 MHz (40 devices), and 320 MHz (40 devices). The server has a 128-core AMD EPYC 7713P CPU, 64 GB of RAM, and an NVIDIA A6000 GPU with 48 GB of VRAM. Unless otherwise specified, the device-server bandwidth is set to 50 Mbps, simulating a network typical of edge–cloud systems~\cite{8567673}. During each training round, 12 devices are randomly selected to participate, each training on 10,000 local samples before transferring its updated model to the server for aggregation.

\textbf{Models}: We evaluate \MethodName\ using four diverse neural network architectures, including CNN-based and transformer-based models commonly used in image classification tasks:
(i)~CNNs - VGG-11~\cite{vgg}, a classic CNN utilizing layers of convolution, pooling and fully connected layers, and MobileNetV3-Large (MobileNet-L)~\cite{mobilenet}, an efficient convolutional architecture optimized for mobile devices.
(ii)~Transformers - Swin-T~\cite{swin} and Vision Transformer-Small (ViT-S)~\cite{steiner2021augreg,vit}\footnote{We used the ViT-S model~\cite{steiner2021augreg}, which is based on the original vision transformer architecture~\cite{vit} but with fewer parameters.}, reflecting the shift toward attention-based architectures.

These models are selected to cover a range of architectures, from traditional CNNs to modern vision transformers to ensure that \MethodName\ can be generalized across different types of networks. All architectures require more than the 4 GB unified memory available on Jetson Nano devices, making classic FL infeasible due to memory limitations; they can only be trained using a strategy that splits the neural network, such as SFL.

\textbf{Datasets}: We evaluate \MethodName\ using two vision datasets: CIFAR-10~\cite{cifar10,cifar10-2} and Tiny ImageNet~\cite{tinyimagenet}. CIFAR-10 contains 60,000 images (50,000 for training and 10,000 for testing) with 32$\times$32 pixel resolution, spanning 10 classes. Tiny ImageNet, a subset of ImageNet, includes 110,000 images resized to 64$\times$64 pixels, with 500 training and 50 validation samples per class across 200 classes.

The datasets are partitioned across devices in a non-IID manner using a Dirichlet distribution $Dir(\frac{\alpha}{1-\alpha+\epsilon})$ with $\alpha\in(0,1]$ and a small constant $\epsilon<1e-8$, following prior studies~\cite{acar2021federated,fedavg,fedavgm}. Each device is assigned a class distribution vector sampled from the Dirichlet distribution, determining the probability of each class appearing on that device. Data samples are then allocated by randomly selecting labels according to this probability distribution until all samples are assigned, ensuring a realistic data heterogeneity for FL/SFL. A smaller $\alpha$ increases heterogeneity by concentrating most samples into fewer classes per device, whereas $\alpha=1$ causes $\frac{\alpha}{1-\alpha+\epsilon} \to \infty$, resulting in an approximately IID distribution. By default, we set  $\alpha=0.33$ to represent a moderate non-IID scenario unless stated otherwise.

\textbf{Baselines}: We compare \MethodName\ against the following SFL baselines (see Section~\ref{sec:rw} for details) with the same split point: (i) SplitFed~\cite{splitfed}, the first SFL system, (ii) PiPar~\cite{pipar}, which uses pipeline parallelism to overlap the communication and computation of SFL, reducing the overall training time, (iii) SplitGP~\cite{splitgp}, which introduces local loss functions on devices. The model learns personalized features from local loss and general features from global loss, improving overall accuracy, and (iv) SplitFed + SCAFFOLD~\cite{scaffold}, where SCAFFOLD mitigates the non-IID issue in FL by maintaining correction terms that reduce the divergence between each device’s model, which is extended to SFL in this paper.

\textit{Rationale for baselines}: PiPar is selected as representative of systems that improve communication efficiency. Although the total communication volume is not reduced in PiPar, communication is overlapped with computation to reduce the overall training time. A variant of SplitGP that relies solely on local loss reduces communication by half but negatively impacts model accuracy. Therefore, we use the version of SplitGP that incorporates both local and global losses as a baseline. SCAFFOLD mitigates non-IID issues in FL. In our implementation of the baseline system, we extend its application to SFL by combining it with SplitFed. Classic FL is not included because training a full model on devices is infeasible due to memory and computational constraints. However, in Section~\ref{subsubsec:comm}, we estimate the FL system's communication cost and compare it with \MethodName.

\begin{figure*}[tp]
    \centering
    \begin{subfigure}{0.24\linewidth}
        \includegraphics[width=0.97\linewidth]{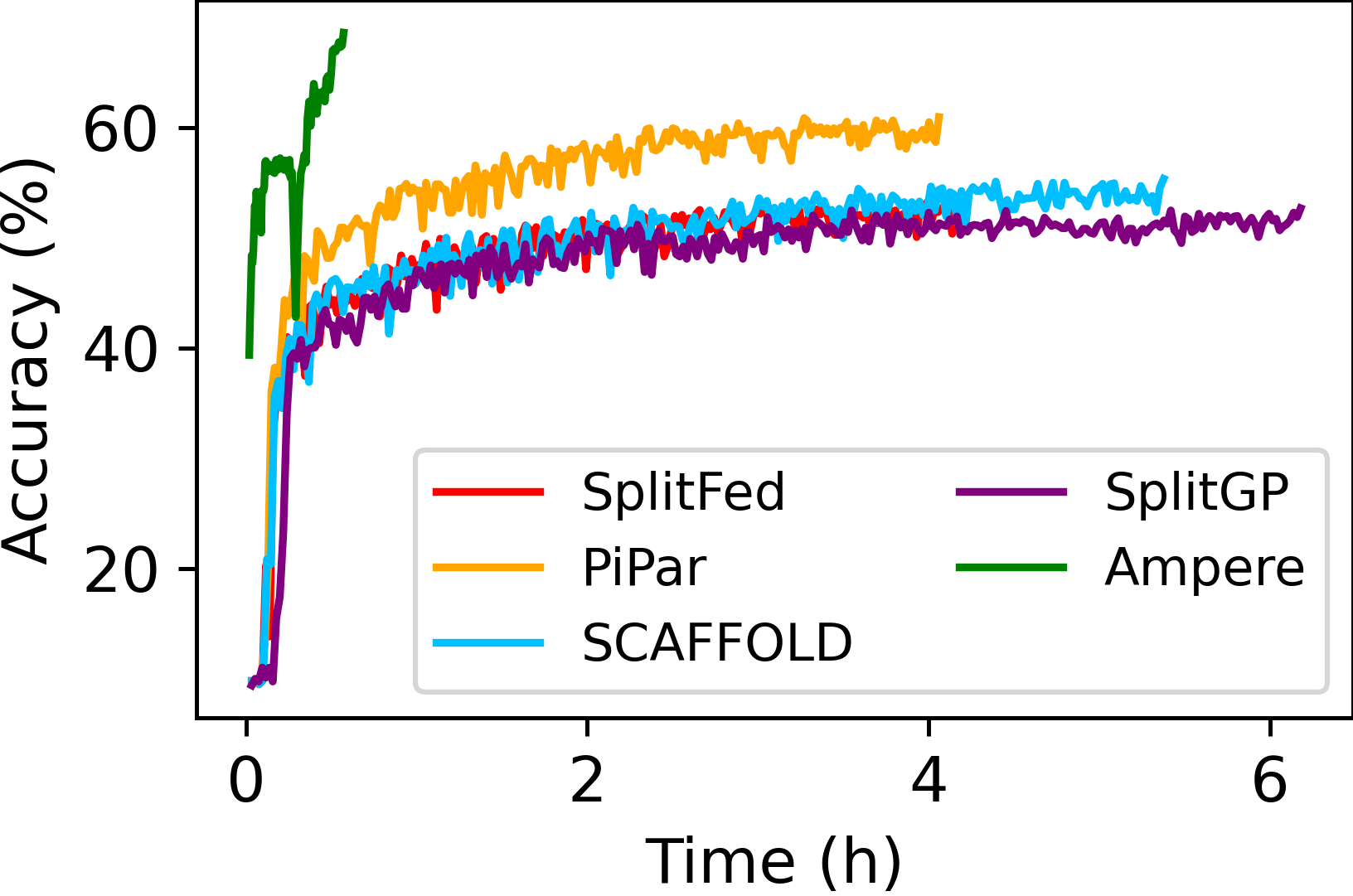}
        \caption{MobileNet-L on CIFAR-10}
        \label{subfig:mobile_cifar10}
    \end{subfigure}
    \begin{subfigure}{0.24\linewidth}
        \includegraphics[width=0.97\linewidth]{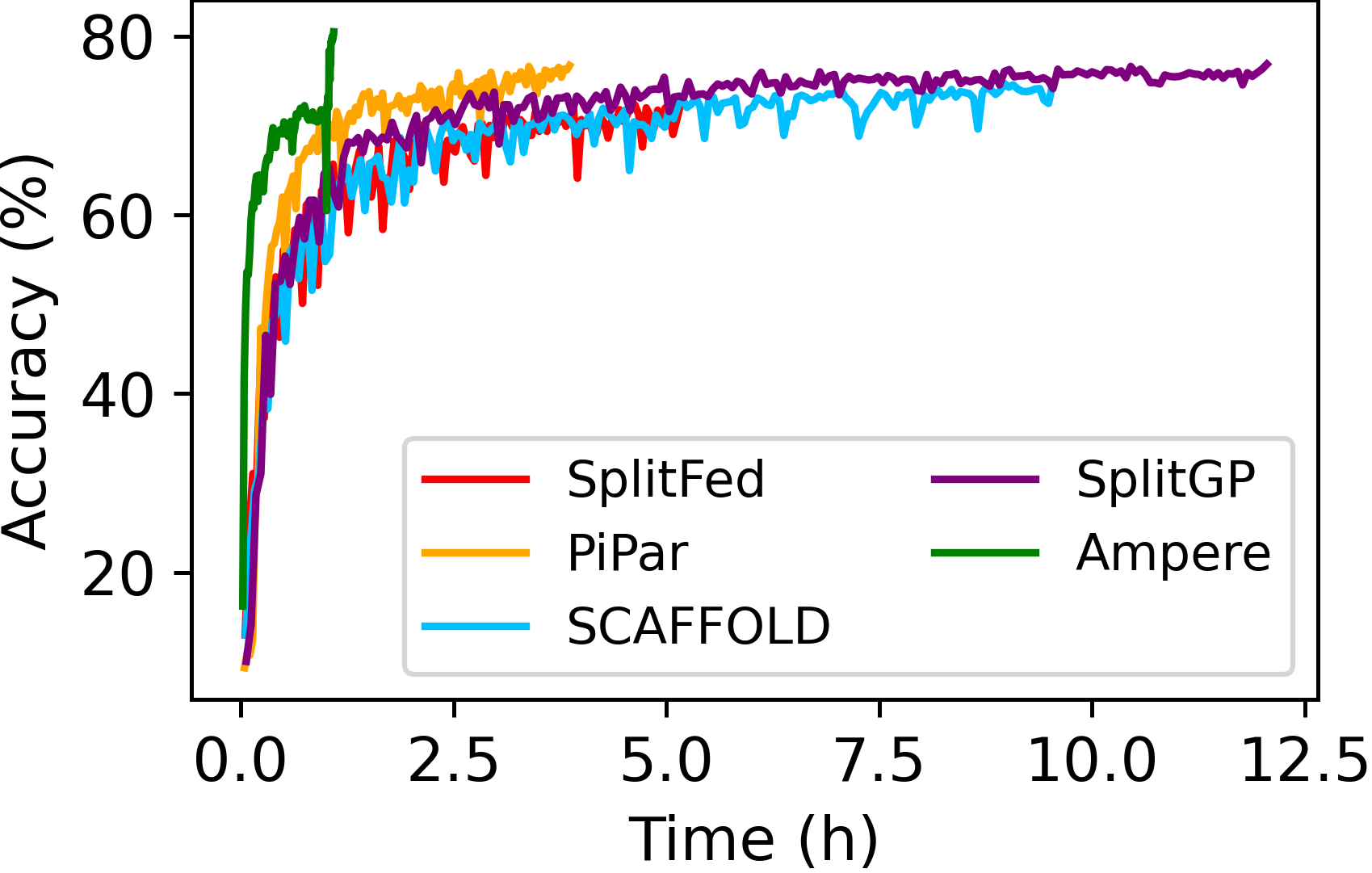}
        \caption{VGG-11 on CIFAR-10}
        \label{subfig:vgg11_cifar10}
    \end{subfigure}
    \begin{subfigure}{0.24\linewidth}
        \includegraphics[width=0.97\linewidth]{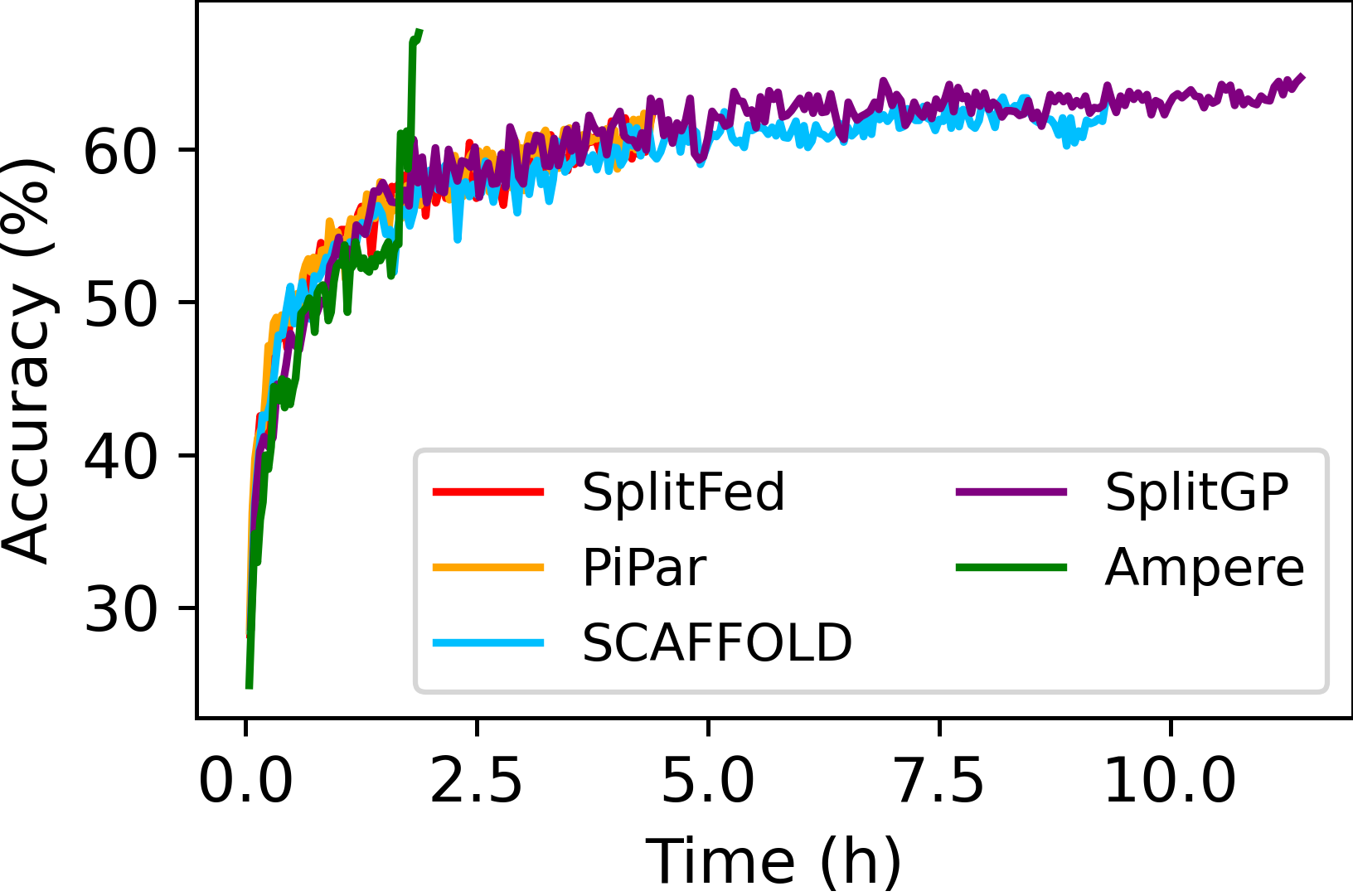}
        \caption{Swin-T on CIFAR-10}
        \label{subfig:swint_cifar10}
    \end{subfigure}
    \begin{subfigure}{0.24\linewidth}
        \includegraphics[width=0.97\linewidth]{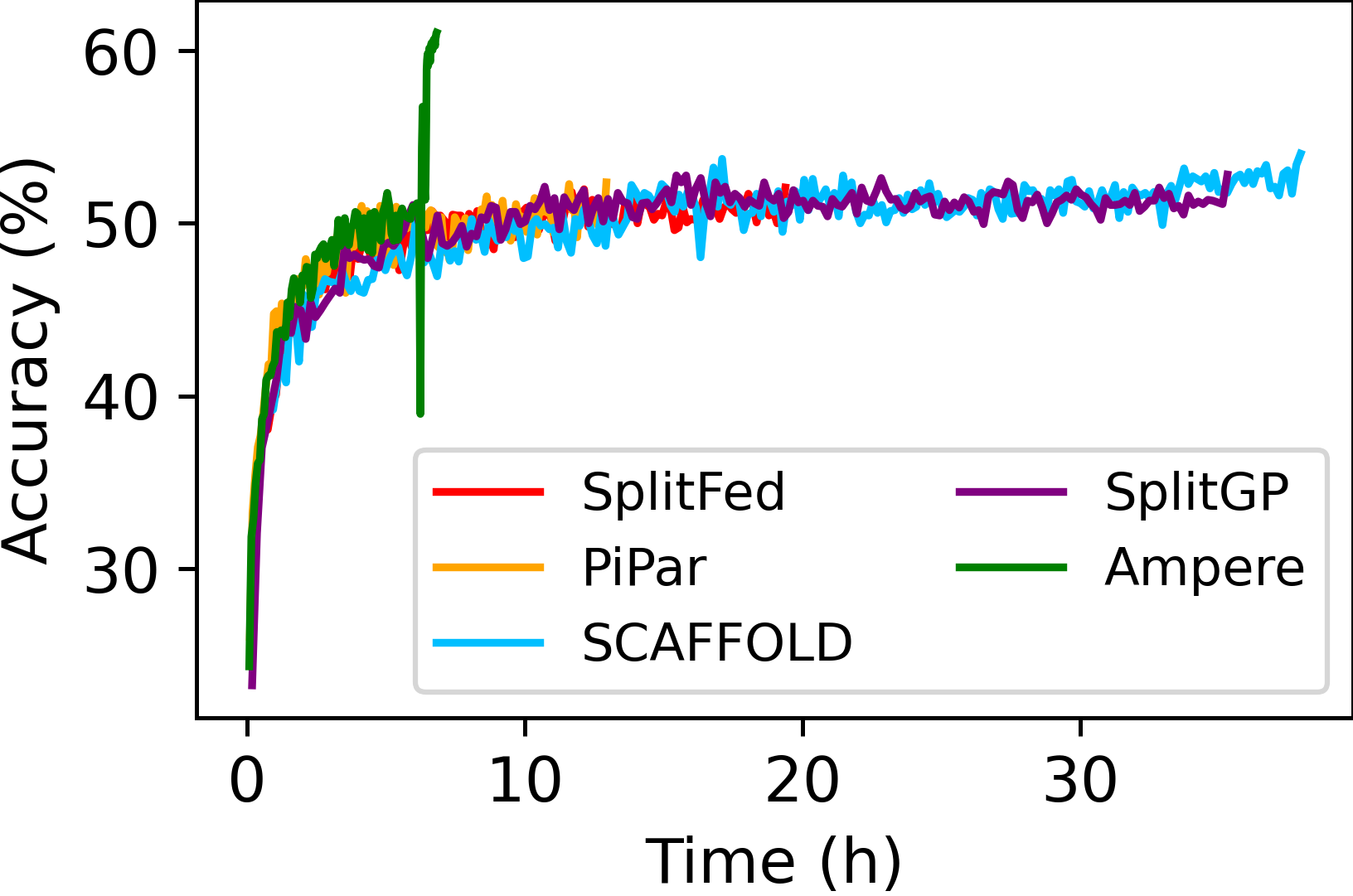}
        \caption{ViT-S on CIFAR-10}
        \label{subfig:vits_cifar10}
    \end{subfigure}
    \begin{subfigure}{0.24\linewidth}
        \includegraphics[width=0.97\linewidth]{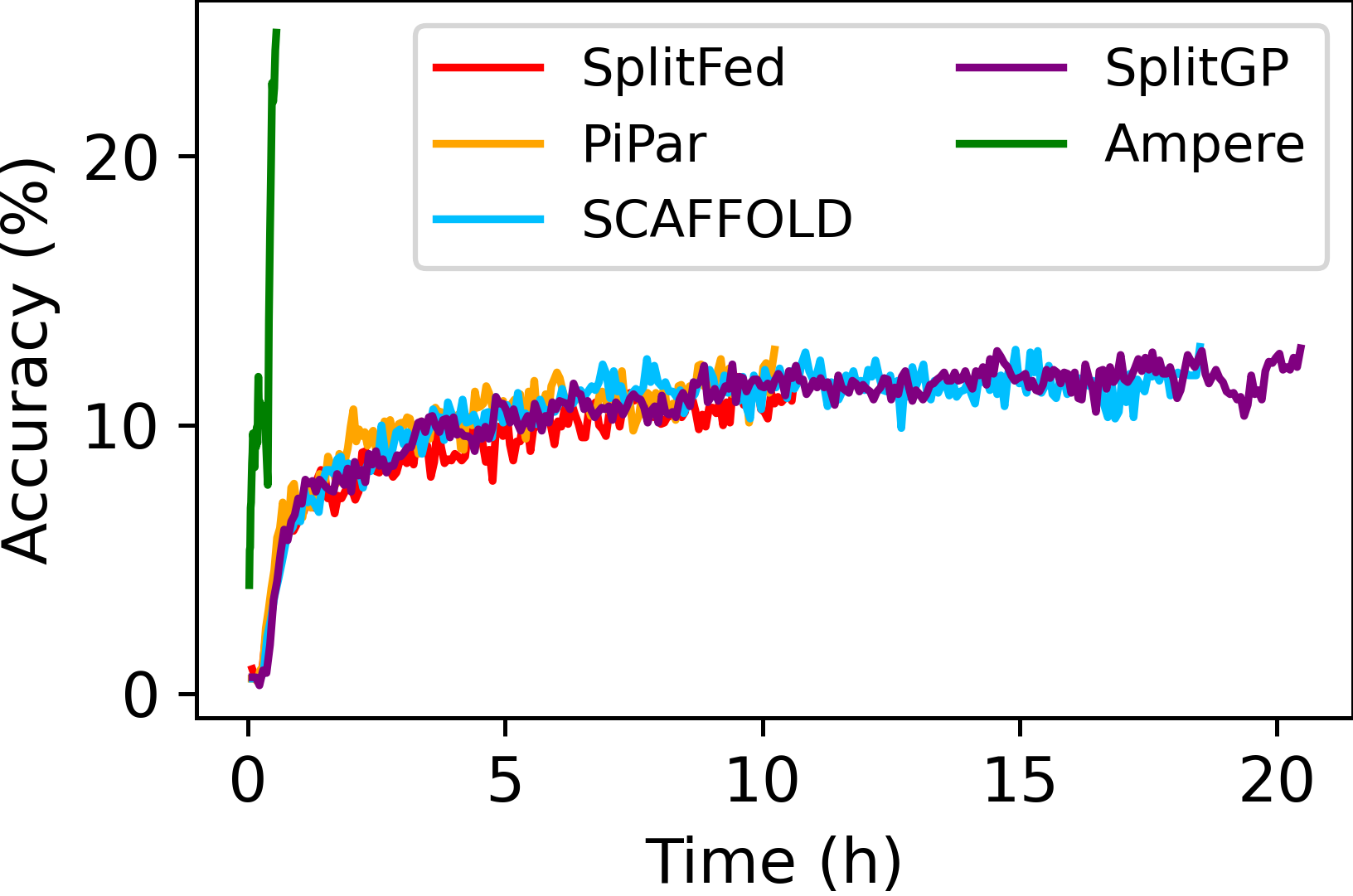}
        \caption{\footnotesize MobileNet-L on Tiny ImageNet}
        \label{subfig:mobile_tinyimagenet}
    \end{subfigure}
    \begin{subfigure}{0.24\linewidth}
        \includegraphics[width=0.97\linewidth]{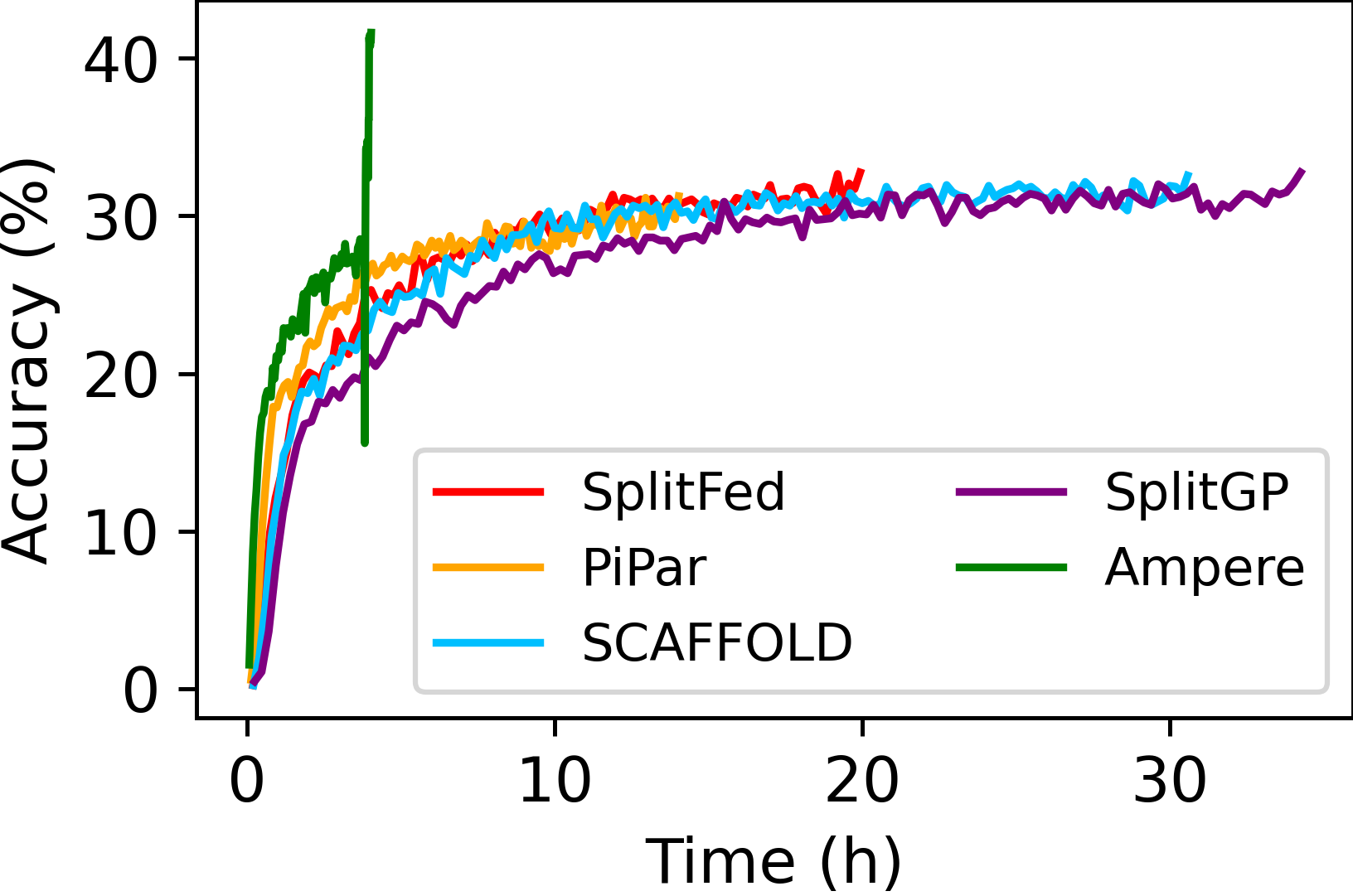}
        \caption{VGG-11 on Tiny ImageNet}
        \label{subfig:vgg11_tinyimagenet}
    \end{subfigure}
    \begin{subfigure}{0.24\linewidth}
        \includegraphics[width=0.97\linewidth]{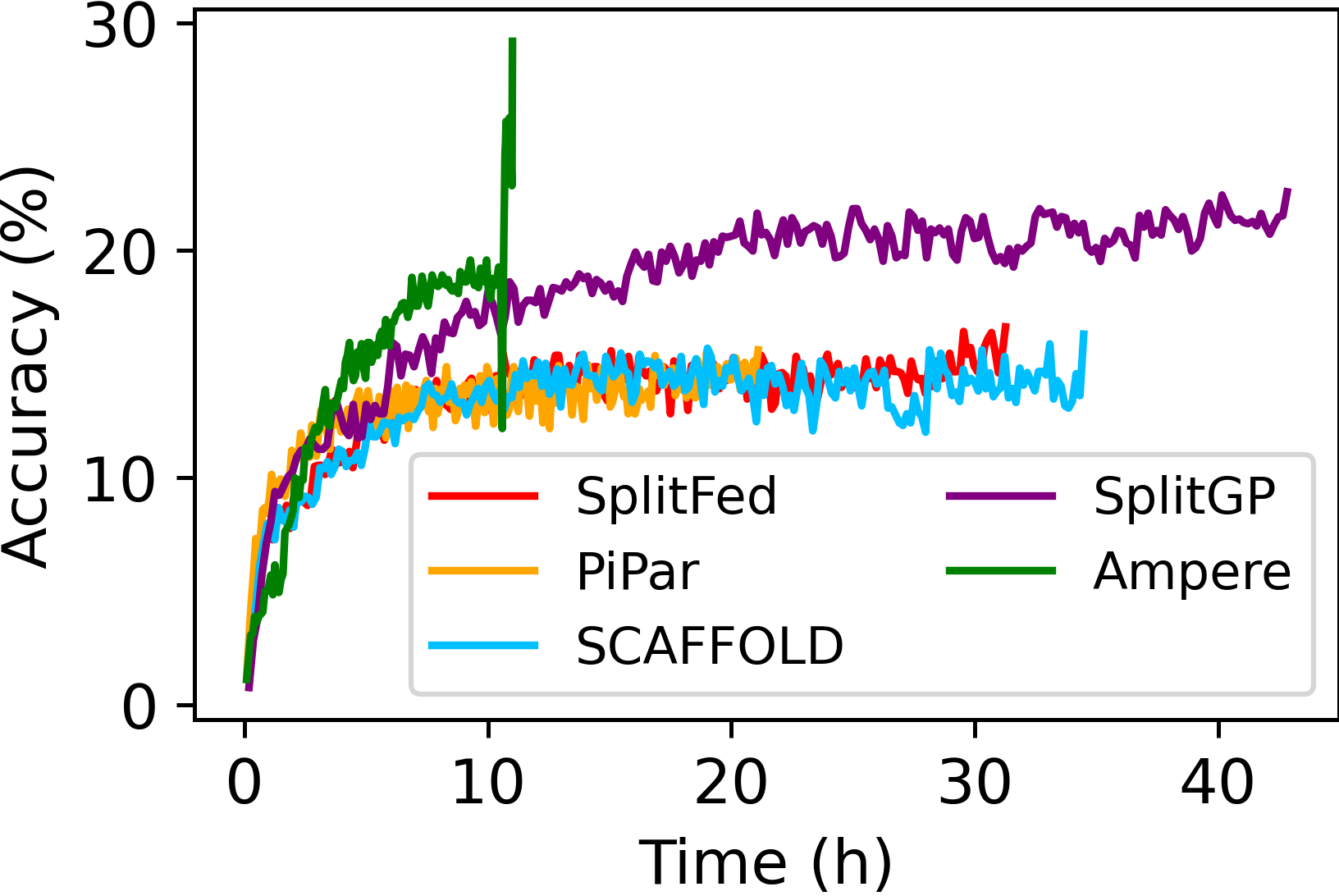}
        \caption{Swin-T on Tiny ImageNet}
        \label{subfig:swint_tinyimagenet}
    \end{subfigure}
    \begin{subfigure}{0.24\linewidth}
        \includegraphics[width=0.97\linewidth]{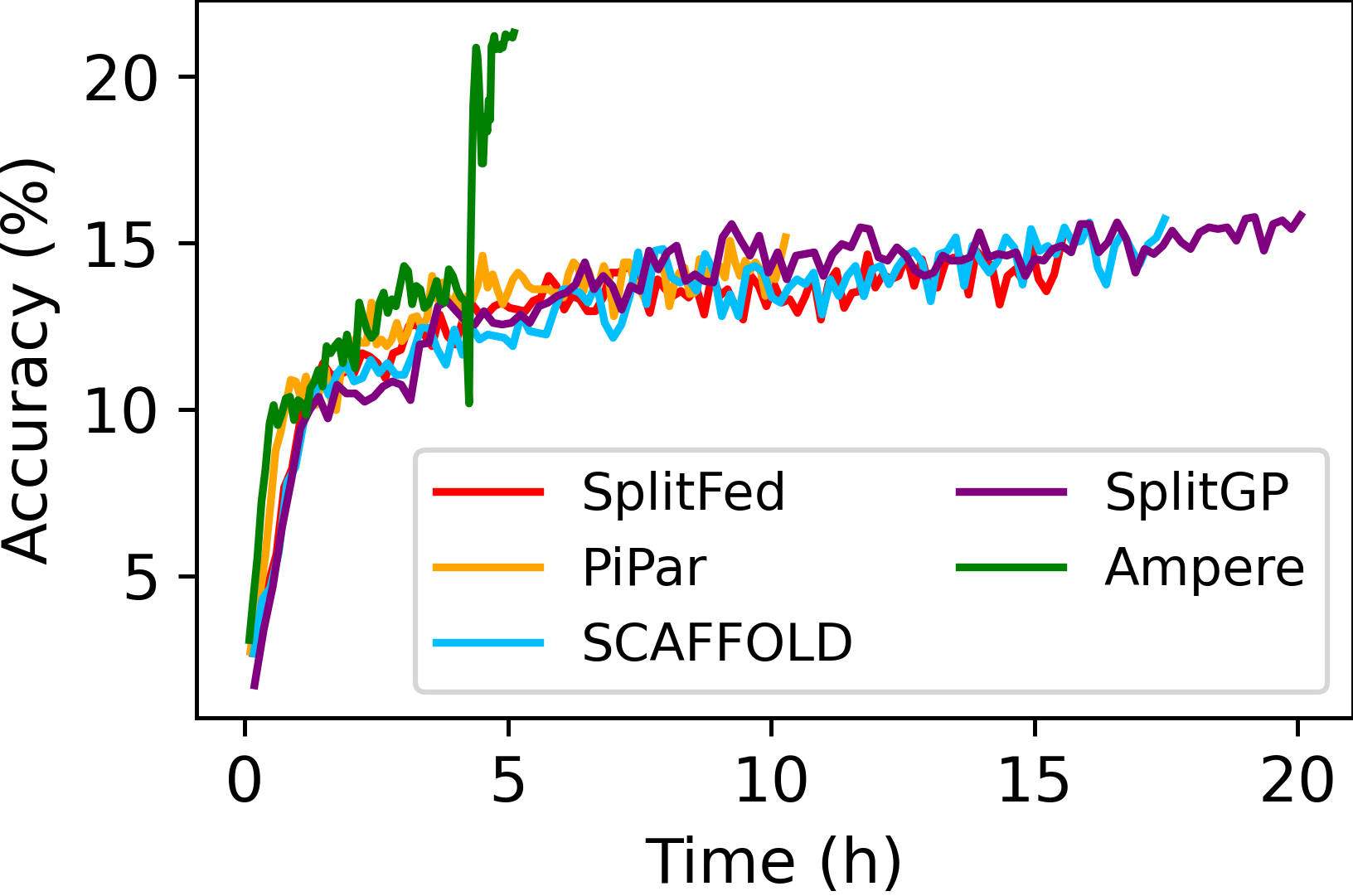}
        \caption{ViT-S on Tiny ImageNet}
        \label{subfig:vits_tinyimagenet}
    \end{subfigure}
    \caption{Model accuracy (higher is better) versus training time (lower is better) for four models on two datasets.}
    \label{fig:time2acc}
\end{figure*}

\subsection{Results}
\label{subsec:results}

\MethodName\ is evaluated against the four SFL-based baselines, focusing on model accuracy, communication and computation efficiency, and robustness against non-IID data. The experiments demonstrate that \MethodName\ achieves higher model accuracy in a lower training time, significantly reduces device-server communication, and effectively mitigates the impact of data heterogeneity. 
Additionally, an ablation study demonstrates the benefits of activation consolidation in improving model accuracy.

\begin{table*}[ht]
    \centering
    \caption{Total number of epochs for training (lower is better) four models on two datasets to convergence, using \MethodName\ and three SFL baselines. For SFL baselines, both device and server blocks are trained in each epoch, whereas in \MethodName\, they are trained sequentially and, thus, noted separately as device epochs and server epochs.}
    \label{tab:epoch}
    \begin{tabular}{c|c|cc|cc|cc|cc}
        \hline
        \colorrow
        \multicolumn{2}{c|}{}  & \multicolumn{2}{c|}{\textbf{MobileNet-L}} & \multicolumn{2}{c|}{\textbf{VGG-11}} & \multicolumn{2}{c|}{\textbf{Swin-T}} & \multicolumn{2}{c}{\textbf{ViT-S}} \\
        \hhline{~~|--------}
        \colorrow
        \multicolumn{2}{c|}{\multirow{-2}{*}{\textbf{Baselines}}} & \footnotesize CIFAR-10 & \footnotesize Tiny ImageNet & \footnotesize CIFAR-10 & \footnotesize Tiny ImageNet & \footnotesize CIFAR-10 & \footnotesize Tiny ImageNet & \footnotesize CIFAR-10 & \footnotesize Tiny ImageNet\\
        \hline
        \multicolumn{2}{c|}{SplitFed} & 200 & 159 & 115 & 109 & 120 & 218 & 131 & 105 \\
        \multicolumn{2}{c|}{PiPar} & 210 & 186 & 121 & 117 & 152 & 229 & 135 & 107 \\
        \multicolumn{2}{c|}{SCAFFOLD} & 240 & 258 & 184 & 156 & 216 & 223 & 244 & 110 \\
        \multicolumn{2}{c|}{SplitGP} & 300 & 298 & 211 & 148 & 240 & 240 & 201 & 115 \\
        \hline
        \highlightrow
        & \textit{device} & \textbf{55} & \textbf{36} & \textbf{61} & \textbf{78} & \textbf{55} & \textbf{127} & \textbf{81} & \textbf{54} \\
        \hhline{~|---------}
        \highlightrow
        \multirow{-2}{*}{\MethodName}& \textit{server} & 32 & 10 & 25 & 24 & 22 & 21 & 46 & 88 \\
        \hline
    \end{tabular}
\end{table*}

\subsubsection{Model Accuracy}
\label{subsubsec:acc}

We evaluate the model accuracy of \MethodName\ against the SFL-based baselines to highlight its ability to achieve higher accuracy with a lower training time. Four models are trained on two datasets to converge. We apply early stopping~\cite{earlystop} during training. Early stopping is guided by convergence - specifically when there is no improvement in validation accuracy for 15 consecutive epochs.

Figure~\ref{fig:time2acc} shows model accuracy against training time. In \MethodName, model accuracy increases as the device block is first trained. Once the device block converges, server block training begins from scratch, causing a temporary accuracy drop in all cases shown in the figure. However, accuracy quickly recovers and surpasses the highest accuracy achieved during device block training. This confirms that final accuracy is improved from server-side training beyond what the auxiliary network alone can achieve.

Compared to SFL baselines that train server and device blocks together, \MethodName\ achieves faster convergence for the device block and higher final accuracy at the end of server block training. \MethodName\ achieves the highest final accuracy within the shortest training time than all baselines in all settings. Specifically, \MethodName\ reduces total training time by 42.1\% to 94.6\% compared to the fastest baseline, while improving final accuracy by 2.95\% to 13.26\% on CIFAR-10 and 5.39\% to 11.7\% on Tiny ImageNet, compared to the baseline with the highest accuracy.

Table~\ref{tab:epoch} presents the number of training epochs required for each model to converge; the epochs for device and server training of \MethodName\ are shown separately. Since the device block and auxiliary network are smaller than the entire model, they require fewer epochs to converge than SFL. Unlike SFL baselines, where devices must remain active throughout training, \MethodName\ reduces device participation time between 44.3\% to 96.3\%, allowing devices to exit training earlier. This enables resource-constrained devices in real-world FL deployments, such as smartphones or IoT devices, to complete their training sooner and be available for other tasks without compromising model accuracy.

\begin{tcolorbox}[
    width=0.49\textwidth,
    colframe=black!50!black,
    colback=gray!8,boxrule=0.5pt,
    left=2pt,right=2pt,top=2pt,bottom=2pt]
\textbf{Observation 1:}
\MethodName\ lowers total training time, improves final accuracy, and significantly reduces both total training time and device participation time compared to SFL baselines.
\end{tcolorbox}

\subsubsection{Communication and Computation}
\label{subsubsec:comm}

\begin{table*}[ht]
    \centering
    \caption{Device-server communication volume (GB) per device (lower is better) for training multiple models on two datasets to convergence, using FL (the communication volume is estimated), three SFL baselines, and \MethodName.}
    \label{tab:comm}
    \begin{tabular}{c|cc|cc|cc|cc}
        \hline
        \colorrow
         & \multicolumn{2}{c|}{\textbf{MobileNet-L}} & \multicolumn{2}{c|}{\textbf{VGG-11}} & \multicolumn{2}{c|}{\textbf{Swin-T}} & \multicolumn{2}{c}{\textbf{ViT-S}} \\
        \hhline{~|--------}
        \colorrow
        \multirow{-2}{*}{\textbf{Baselines}} & \footnotesize CIFAR-10 & \footnotesize Tiny ImageNet & \footnotesize \footnotesize CIFAR-10 & \footnotesize Tiny ImageNet & \footnotesize CIFAR-10 & \footnotesize Tiny ImageNet & \footnotesize CIFAR-10 & \footnotesize Tiny ImageNet \\
        \hline
        FL (estimated) & 12.72 & 10.05 & 48.3 & 42.42 & 48.55 & 88.94 & 38.25 & 30.66 \\
        SplitFed & 61.03 & 193.85 & 70.20 & 265.80 & 55.00 & 398.82 & 245.06 & 196.47 \\
        PiPar & 64.17 & 226.85 & 73.91 & 285.36 & 69.74 & 419.04 & 252.60 & 200.25 \\
        SCAFFOLD & 73.23 & 314.55 & 112.42 & 380.50 & 99.19 & 408.16 & 459.76 & 207.36 \\
        SplitGP & 91.63 & 370.60 & 128.92 & 368.13 & 110.0 & 439.10 & 376.01 & 215.24 \\
        \hline
        \highlightrow
        \MethodName\ & \textbf{0.80} & \textbf{7.43} & \textbf{2.41} & \textbf{18.97} & \textbf{2.11} & \textbf{11.59} & \textbf{24.30} & \textbf{23.01} \\
        \hline
    \end{tabular}
\end{table*}

\input{figures/computation}

Here we demonstrate that \MethodName\ reduces device-server communication volume while reducing on-device computation.
Table~\ref{tab:comm} presents the total volume of data exchanged between each device and the server. By transferring activations only once, \MethodName\ reduces communication volume by 88.3\% to 99.1\% compared to all SFL baselines. 

Although FL is infeasible for these models on the devices due to memory constraints, we estimate its communication volume. Since the only difference between FL and SplitFed is where the computation is carried out, both require a similar number of training epochs. In each epoch, FL transfers two full models for aggregation (one upload and one download per device). We estimate the total communication volume of FL by multiplying the size of the two models by the number of training epochs. The results show that \MethodName\ has lower communication than FL by 25.0\% to 95.0\%.

Additionally, \MethodName\ requires fewer on-device training epochs than SFL baselines, reducing on-device computation, as measured in tera floating point operations (TFLOPS) in Figure~\ref{fig:comp}. It achieves model convergence using only 6.87\% to 96.2\% of the on-device computation required by baselines.

\begin{tcolorbox}[
    width=0.49\textwidth,
    colframe=black!50!black,
    colback=gray!8,boxrule=0.5pt,
    left=2pt,right=2pt,top=2pt,bottom=2pt]
\textbf{Observation 2:}
\MethodName\ significantly reduces communication volume compared to all SFL baselines and even FL while also reducing on-device computation.
\end{tcolorbox}

\subsubsection{Impact of Data Heterogeneity}

\input{figures/hetero}

The impact of data heterogeneity in \MethodName\ is captured by considering model accuracy and its standard deviation for different degrees of non-IID data relative to SFL baselines. $\alpha$ determines the degree of non-IID data, but its effect is not linear. We consider three representative non-IID degrees: IID ($\alpha=1$), moderately non-IID ($\alpha=0.33$), and severely non-IID ($\alpha=0.1$).

Figure~\ref{fig:hetero} shows the model accuracy for different non-IID degrees. \MethodName\ consistently outperforms all baselines across various models and datasets, achieving up to 24.35\% higher accuracy. We measure the standard deviation in model accuracy to further quantify the impact of data heterogeneity. \MethodName\ has a maximum standard deviation of 2.68 across all models and datasets, compared to 8.19, 5.75, 10.11, and 11.15 for SplitFed, PiPar, SCAFFOLD, and SplitGP, respectively. \MethodName\ reduces the standard deviation of accuracy by 53.39\% compared to the most stable baseline, indicating that \MethodName\ is less impacted by non-IID data.

\begin{tcolorbox}[
    width=0.49\textwidth,
    colframe=black!50!black,
    colback=gray!8,boxrule=0.5pt,
    left=2pt,right=2pt,top=2pt,bottom=2pt]
\textbf{Observation 3:}
\MethodName\ achieves higher model accuracy and lower standard deviation in accuracy for different non-IID degrees compared to SFL baselines.
\end{tcolorbox}

\subsubsection{Impact of Activation Consolidation}

To validate the effectiveness of activation consolidation, an ablation study that compares the model accuracy of \MethodName\ with and without activation consolidation is considered.

In \MethodName\ without activation consolidation, device block training remains unchanged. However, once the device block converges, each device sends its activations to the server, which maintains $K$ separate activation sets $\mathcal{A}_k,k\in[K]$, each corresponding to a specific device as in an SFL system. After collecting all activations, the server trains $K$ individual server blocks on their respective activation sets and periodically aggregates them into a global server block, following the approach used in SFL baselines.

As shown in Figure~\ref{fig:act_merge}, activation consolidation improves the model accuracy of \MethodName\ between 6.04\% to 13.2\% across various models and datasets. 
Therefore, activation consolidation mitigates non-IID challenges and improves model accuracy in \MethodName.

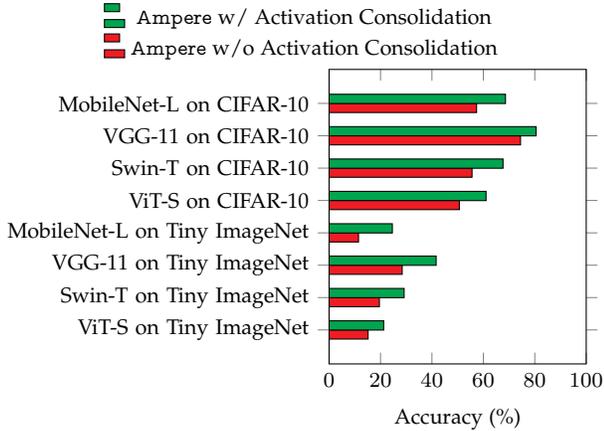
\begin{figure}[tb]
    \centering
    \begin{tikzpicture}[scale=1]
        \centering
        \begin{axis}[
            xbar,
            width=5cm,          
            height=5.5cm,       
            xmin=0, xmax=100,            
            xlabel={Accuracy (\%)},
            xlabel style={
                at={(axis description cs:0.5,-0.12)},
                anchor=north,
                font=\footnotesize
              },
            symbolic y coords={
                MobileNet-L on CIFAR-10,
                VGG-11 on CIFAR-10,
                Swin-T on CIFAR-10,
                ViT-S on CIFAR-10,
                MobileNet-L on Tiny ImageNet,
                VGG-11 on Tiny ImageNet,
                Swin-T on Tiny ImageNet,
                ViT-S on Tiny ImageNet
            },
            ytick=data,
            y dir=reverse,                       
            bar width=3.5pt,                       
            enlarge y limits=0.15,
            legend style={
                at={(-0.1,1)},
                anchor=south,
                draw=none,
                fill=none,
                legend columns=1,
                font=\footnotesize
              },
            xticklabel style={font=\footnotesize},
            yticklabel style={
                text width=12em, 
                align=right,
                reverse legend=true,
                font=\footnotesize
            } 
        ]   
            \addplot[fill=Red,bar shift=-0.06cm] coordinates {
                (57.36,MobileNet-L on CIFAR-10)
                (74.45,VGG-11 on CIFAR-10)
                (55.59,Swin-T on CIFAR-10)
                (50.66,ViT-S on CIFAR-10)
                (11.44,MobileNet-L on Tiny ImageNet)
                (28.43,VGG-11 on Tiny ImageNet)
                (19.56,Swin-T on Tiny ImageNet)
                (15.17,ViT-S on Tiny ImageNet)
            };
            \addlegendentry{\MethodName\ w/o Activation Consolidation}

            \addplot[fill=Green,bar shift=0.06cm] coordinates {
                (68.6,MobileNet-L on CIFAR-10)
                (80.49,VGG-11 on CIFAR-10)
                (67.64,Swin-T on CIFAR-10)
                (61.04,ViT-S on CIFAR-10)
                (24.6,MobileNet-L on Tiny ImageNet)
                (41.63,VGG-11 on Tiny ImageNet)
                (29.18,Swin-T on Tiny ImageNet)
                (21.23,ViT-S on Tiny ImageNet)
            };
            \addlegendentry{\MethodName\ w/ Activation Consolidation}
        \end{axis}
    \end{tikzpicture}
    \caption{Model accuracy (higher is better) of \MethodName\ with (w/) and without (w/o) Activation Consolidation on four different models and two datasets.}
    \label{fig:act_merge}
\end{figure}

\begin{tcolorbox}[
    width=0.49\textwidth,
    colframe=black!50!black,
    colback=gray!8,boxrule=0.5pt,
    left=2pt,right=2pt,top=2pt,bottom=2pt]
\textbf{Observation 4:}
Activation consolidation mitigates non-IID challenges and improves model accuracy in \MethodName.
\end{tcolorbox}

\section{Related Work}
\label{sec:rw}
In this section, we review existing methods that reduce communication costs and improve model accuracy in SFL, analyze their limitations, and contrast them with \MethodName.

\textbf{Reducing communication overhead arising from intermediate results}.
Existing research has explored four strategies to reduce communication overheads from exchange of activations and gradients in SFL systems:

\textbf{(1) Generating gradients locally:} 
Gradients are computed on devices to eliminate server-side gradient transmission, reducing SFL communication by half~\cite{han2021accelerating}. FedGKT~\cite{fedgkt} deploys local loss functions on devices and distills knowledge between the local and global models. EcoFed~\cite{ecofed} reduces communication using a pre-trained device block but relies on public datasets. This limits its use in real-world FL scenarios as public data may not correspond to private device data. 
\textit{However, activations still need to be transferred, making communication in SFL systems higher than the FL counterparts.}

\textbf{(2) Overlapping communication and computation:} PiPar~\cite{pipar} utilizes pipeline parallelism to overlap computation with activation transfers, thereby reducing idle time. This improves training efficiency by computing and communicating concurrently, \textit{but does not reduce the total communication volume or frequency.}

\textbf{(3) Clustering devices:} 
FedAdapt~\cite{fedadapt} and ParallelSFL~\cite{parallelsfl} group devices with similar compute and bandwidth capacities, optimizing communication within clusters.
\textit{However, the trade-off between on-device computation and device-server communication is unresolved for individual devices.}

\textbf{(4) Compressing activation/gradients:}
In split learning, activations and gradients are compressed into lower-dimensional representations~\cite{bottleneck,c3sl}. However, these methods are not designed for FL with multiple devices and will reduce model accuracy. BaF~\cite{b2f} selectively compresses activations and applies a predictive reconstruction method, while randomized top-k sparsification~\cite{topksparse} aims to retain accuracy. 
\textit{While these methods reduce activation size, they remain less efficient than FL which avoids activation transfers entirely.}

Other FL systems reduce communication by training sub-models instead of full models on devices~\cite{10.1145/3447993.3483278,10.1145/3372224.3419188} or using hyperdimensional computing~\cite{10.1145/3495243.3558757}. However, these require high on-device computation and are not suited for SFL.

Existing methods fail to reduce the communication volume or frequency in SFL systems to FL levels because they still rely on iterative activation transfers. 
In contrast, \MethodName\ reduces per-iteration activation transfers to a one-shot activation exchange, thereby significantly reducing both communication volume and frequency.

\textbf{Improving model accuracy under non-IID data}. 
Existing systems for improving FL model accuracy that train with non-IID data either modify local loss functions or global aggregation algorithms. FedAvgM~\cite{fedavgm} adjusts the way models are updated to make convergence more stable, reducing the impact of variable local updates. FedProx~\cite{fedprox} introduces a regularization term that restricts local model updates, preventing them from deviating significantly from the global model. SCAFFOLD~\cite{scaffold} addresses variations in local model updates by maintaining correction terms that adjust for differences in gradient updates across devices, ensuring that all devices contribute more uniformly to the global model. SplitGP\cite{splitgp} combines the gradients produced by local loss on devices and by global loss on the server to balance generalization and personalization, mitigating the non-IID issue. 

While these systems mitigate the non-IID issue of the original data to some extent, they do not specifically address activation heterogeneity in SFL. Instead, each device block is trained separately on local data, and the server block continues to receive activations that reflect non-IID distributions. Thus, the server block learns inconsistent feature representations from device-specific activations, reducing generalization and degrading model accuracy.

\MethodName\ addresses the challenge by consolidating activations into a single dataset for server block training. As the consolidated dataset is more homogeneous than each device's activations, \MethodName\ improves overall model accuracy without requiring specialized aggregation algorithms.

\section{Conclusion and Discussion}
\label{sec:concl}
Three challenges limit the use of a Split Federated Learning (SFL) system in the real-world. Firstly, it has high communication overheads arising from frequent exchanges of intermediate activations and gradients between devices and the server. Secondly, the trade-off between on-device computation and device-server communication when selecting the split point leads to less optimal performance. Thirdly, data heterogeneity impacts model accuracy. 

We propose \MethodName\, a new collaborative training system that goes beyond SFL to address these challenges by introducing three key innovations: (1) unidirectional inter-block training, a novel technique to sequentially trains device and server layers instead of using end-to-end backpropagation in SFL systems. This eliminates the computation and communication trade-off and achieves both optimal on-device computation and device-server communication at the same split point; (2) a lightweight auxiliary network generation method, which decouples device and server training, eliminating iterative transfers of intermediate results in SFL systems and significantly reducing communication volume and frequency; and (3) an activation consolidation method, which trains a single server block on a unified activation set generated by all devices using converged device layers, rather than training on device-specific, non-IID activations as in SFL, thereby mitigating the impact of data heterogeneity. 

\MethodName\ is evaluated against state-of-the-art SFL baselines using CNN and transformer-based models. Results show that \MethodName\ improves accuracy by up to 13.26\% while reducing training time by up to 94.6\% and reduces device-server communication overhead by up to 99.1\% and on-device computation by up to 93.13\%. Furthermore, \MethodName\ reduces the standard deviation of accuracy by 53.39\% for various non-IID degrees compared to other baselines, indicating that it is less impacted by non-IID data.

While \MethodName\ effectively addresses key challenges in split federated learning, there are open avenues for improvement as considered below.

\textit{Privacy considerations.} Like most SFL systems, \MethodName\ requires devices to upload intermediate activations to the server. Although activations are transferred only once in \MethodName, prior research~\cite{mahendran2015understanding} has shown that such intermediate features may still be vulnerable to inversion attacks, particularly in vision models. \MethodName\ is compatible with several existing privacy-preserving techniques, including differentially private training (e.g., DP-SGD~\cite{dp}), homomorphic encryption~\cite{cheon2017homomorphic} for encrypted activation transfer, and certified defences such as PixelDP~\cite{pixeldp} that target privacy leakage from intermediate representations. These defences can be integrated and evaluated in the context of one-shot activation transfer to extend \MethodName\ for privacy-sensitive applications.

\textit{Hardware constraints.} 
Although \MethodName\ significantly reduces on-device computation compared to existing FL and SFL systems, it still assumes the availability of devices capable of executing a shallow device block and a lightweight auxiliary network. Severely resource-constrained platforms, such as microcontrollers or bare-metal sensors, are outside the current scope of this work. Extending the system to support such ultra-low-end devices remains a direction for future investigation.


%





\ifCLASSOPTIONcompsoc
  \section*{Acknowledgments}
\else
  \section*{Acknowledgment}
\fi
This research is funded by Rakuten Mobile, Inc., Japan, and supported by funds from the UKRI grant EP/Y028813/1.

\ifCLASSOPTIONcaptionsoff
  \newpage
\fi




%
\vskip -1.5\baselineskip plus -1fil
\begin{IEEEbiography}
[{\includegraphics[width=1in,height=1.25in,clip,keepaspectratio]{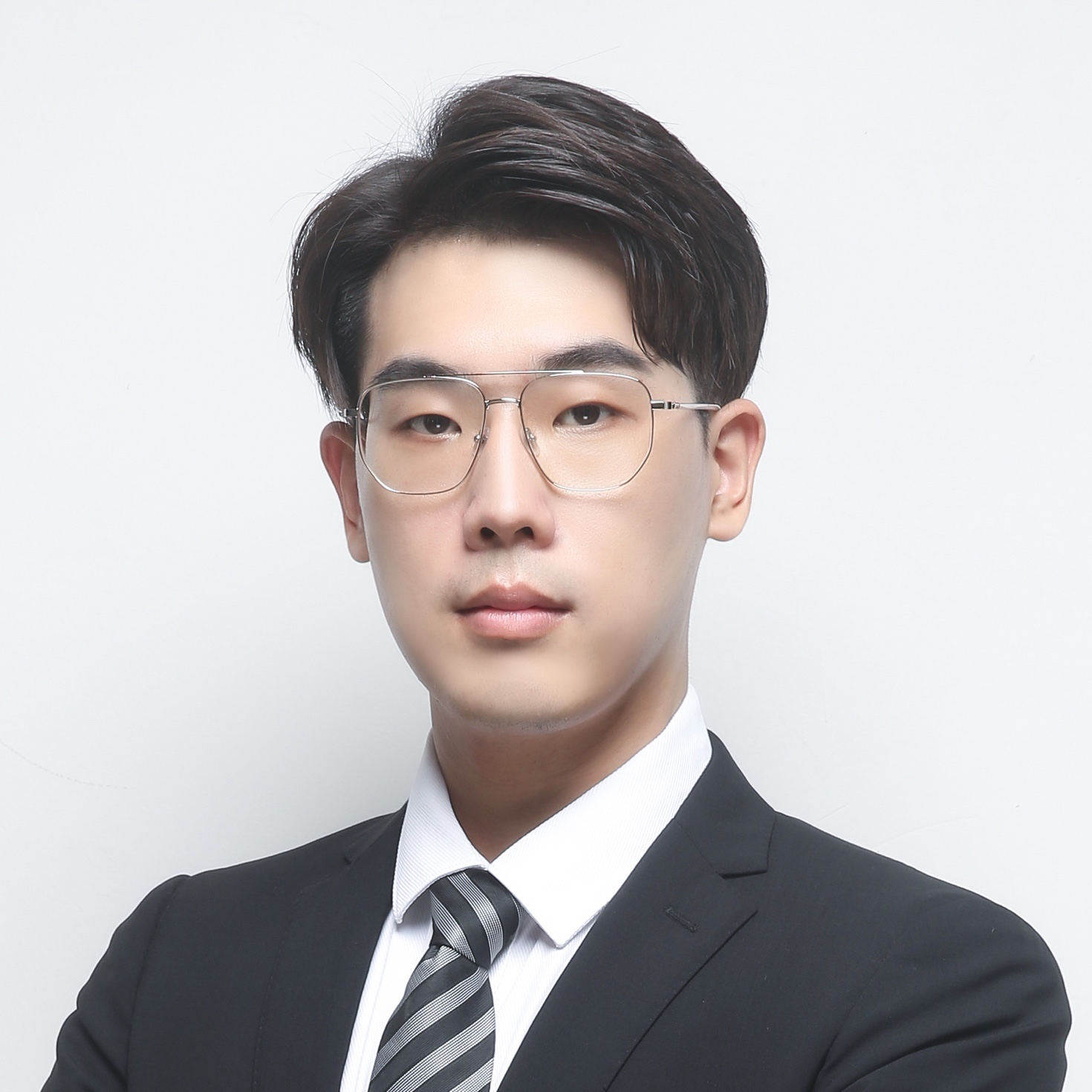}}]{Zihan Zhang}
is a PhD student at University of St Andrews. He was previously a researcher at Noah’s Ark Lab, Huawei. Prior to that, he obtained a Master's degree in Data Science with distinction from University of Edinburgh and a Bachelor degree in Mathematics from Shandong University. His interests include collaborative machine learning and cloud/edge computing.
\end{IEEEbiography}
\vskip -2.5\baselineskip plus -1fil

\begin{IEEEbiography}[{\includegraphics[width=1in,height=1.25in,clip,keepaspectratio]{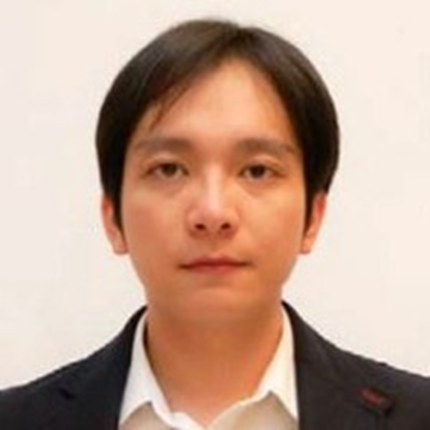}}]{Leon Wong}
is the Research Collaboration and Engineering Lead for Autonomous Networks Research \& Innovation in Rakuten Mobile, Inc. He is currently serving as chairman of ITU-T Focus Group of Autonomous Networks (FG-AN), established under ITU-T Study Group 13 - Future networks and emerging network technologies. He is also the co-chair of FG-AN Ad hoc group for Japan's Telecommunication Technology Committee (TTC).
\end{IEEEbiography}
\vskip -2.5\baselineskip plus -1fil

\begin{IEEEbiography}[{\includegraphics[width=1in,height=1.25in,clip,keepaspectratio]{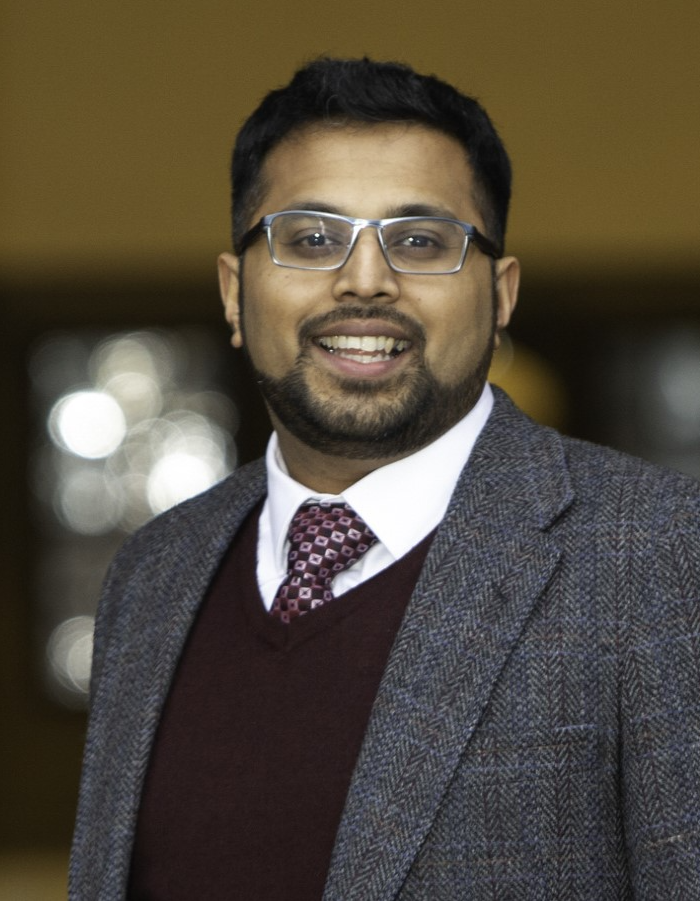}}]{Blesson Varghese}
received the PhD degree in computer science from the University of Reading, UK on international scholarships. He is a Reader in computer science at the University of St Andrews, UK, and the Principal Investigator of the Edge Computing Hub. He was a previous Royal Society Short Industry Fellow. His interests include distributed systems that span the cloud-edge-device continuum and edge intelligence applications. More information is available from \url{www.blessonv.com}.
\end{IEEEbiography}







\end{document}